\documentclass[a4paper,11pt]{article}

\usepackage {amsmath,amssymb,latexsym}
\usepackage {delarray,graphics,graphpap,alltt,graphicx}
\usepackage {a4wide}







\newcommand{\term}{\, \mathrm{term}}

\newcommand{\roC}{\rho _C}

\newtheorem{theorem}{Theorem}[section]

\newtheorem{lemma}[theorem]{Lemma}

\newtheorem{claim}[theorem]{Claim}
\newtheorem{remark}[theorem]{Remark}

\newenvironment{definition}
{ {\noindent {\bf Definition.}} } {  }

\newenvironment{notation}
{ {\noindent {\bf Notation.}} } {  }

\newenvironment{proposition*}[1]
{ {\noindent {\bf Proposition #1.}} } {  }

\newenvironment{proof}
{
{\noindent {\bf Proof.}}}%
{ \hfill $\Box$ }

\newenvironment{proofof}[1]
{
{\noindent {\bf Proof #1.}%
}} { \hfill $\Box$ }





\newcommand{\bool}[1]{\{ 0,1\}^{#1}}

\newcommand{\N}{\mathbb{N}}



\def\squareforqed{\hbox{\rlap{$\sqcap$}$\sqcup$}}
\def\qed{\ifmmode\squareforqed\else{\unskip\nobreak\hfil
\penalty50\hskip1em\null\nobreak\hfil\squareforqed
\parfillskip=0pt\finalhyphendemerits=0\endgraf}\fi}

\newcommand{\pd}{\mathrm{PD}}
\newcommand{\ipd}{\mathrm{invPD}}

\newcommand{\comment}[1]{}
\newcommand{\pl}{\mathrm{plogon}}


\begin{document}


\title{Polylog space compression, pushdown compression, and Lempel-Ziv are incomparable}
\author{Elvira Mayordomo\footnote{Departamento de Inform\'atica e Ingenier\'ia de Sistemas, Instituto de Investigaci\'on en Ingenier\'{\i}a de Arag\'on (I3A), Mar\'ia de Luna 1, Universidad de Zaragoza, 50018 Zaragoza, SPAIN.
elvira(at)unizar.es}\ \footnote{Research supported in part
by Spanish Government MEC and the European Regional
Development Fund (ERDF) under
   Projects TIN2005-08832-C03-02 and TIN2008-06582-C03-02.}\and Philippe Moser\footnote{Department of Computer Science, National University
of Ireland Maynooth, Maynooth, Co. Kildare, Ireland.
pmoser(at)cs.nuim.ie}\ \footnotemark[2]\and Sylvain
Perifel\footnote{LIAFA (Universit\'e Paris Diderot -- Paris
7, CNRS), Paris, France.
sylvain.perifel(at)liafa.jussieu.fr}}

\date{\today}


    \maketitle


\begin{abstract}
The pressing need for efficient compression schemes for XML
documents has recently been focused on stack computation
\cite{HarSha06, LeaEng07}, and in particular calls for a
formulation of information-lossless stack or pushdown
compressors that allows a formal analysis of their
performance and a more ambitious use of the stack in XML
compression, where so far it is mainly connected to parsing
mechanisms. In this paper we introduce the model of
pushdown compressor, based on pushdown transducers that
compute a single injective function while keeping the
widest generality regarding stack computation.

We also consider online compression algorithms that use at
most polylogarithmic space (plogon). These algorithms
correspond to compressors in the data stream model.

We compare the performance of these two families of
compressors with each other and with the general purpose
Lempel-Ziv algorithm. This comparison is made without any a
priori assumption on the data's source and considering the
asymptotic compression ratio for infinite sequences. We
prove that in all cases they are incomparable.
\end{abstract}

{\bf Keywords: }{compression algorithms, plogon,
computational complexity, data stream algorithms,
Lempel-Ziv algorithm, pushdown compression.}

\section{Introduction}
The compression algorithms that are required for today
massive data applications necessarily fall under very
limited resource restrictions. In the case of the data
stream setting, the algorithm receives a stream of elements
one-by-one and can only store a brief summary of them, in
fact the amount of available memory is far below linear
\cite{AlMaSz99,IndWoo05}. In the context of XML data bases
the main limiting factor being document size renders the
use of syntax directed compression particularly
appropriate, i.e. compression centered on the grammar-based
generation of XML-texts and performed with stack memory
\cite{HarSha06,LeaEng07}.

In this paper we introduce and formalize useful compression
mechanisms that can be implemented within low
resource-bounds, namely pushdown compressors and
polylogarithmic space online compression algorithms. We
compare these two with each other and with the general
purpose Lempel Ziv algorithm \cite{LemZiv78}.

Finite state compressors were extensively used and studied
before the celebrated result of Lempel and Ziv
\cite{LemZiv78} that their algorithm is asymptotically
better than any finite-state compressor. However, until
recently the natural extension of finite-state to pushdown
compressors has received much less attention, a situation
that has changed due to new specialized compressors for
XML. The work done on stack transducers has been basic and
very connected to parsing mechanisms. Transducers were
initially considered by Ginsburg and Rose in
\cite{GinRos66} for language generation, further corrected
in \cite{GinRos68}, and summarized in \cite{AuBeBo07}. For
these models the role of nondeterminism is specially useful
in the concept of $\lambda$-rule, that is a transition in
which a symbol is popped from the stack without reading any
input symbol.

We introduce here the concept of pushdown compressor as the
most general stack transducer that is compatible with
information-lossless compression. We allow the use of
$\lambda$-rules while having a deterministic (unambiguous)
model. The existence of endmarkers is also allowed, since
it allows the compressor to move away from mere prefix
extension. A more feasible model will also be considered
where the pushdown compressor is required to be invertible
by a pushdown transducer (see Section
\ref{s.pushdown.compressors}). As mentioned before, stack
compression is especially adequate for XML-texts and has
been extensively used \cite{HarSha06,LeaEng07}. We will
also consider  an even more restrictive computation model,
known as visibly pushdown automata
\cite{AluMad06,KuMaVi07}, on which XML compression can be
performed.

Polylogarithmic space online compressors (plogon) are
compression algorithms that use at most polylogarithmic
memory while accessing the input only once. This type of
algorithms models the compression that can actually be
performed in the setting of data streams, where sublinear
space bounds and online input access are assumed, with
constant and polylogarithm being the main bounds
\cite{AlMaSz99,IndWoo05}.

For the comparison of different compression mechanisms we
consider asymptotic  compression ratio for infinite
sequences, and without any a priori assumption on the
data's source. Notice that this excludes results that
assume a certain probability distribution on the data, for
instance the fact that under an ergodic source, the
Lempel-Ziv compression coincides exactly with the entropy
of the source with high probability on finite inputs
\cite{LemZiv78}. This last result is useful when the data
source is known, but it is not informative for arbitrary
inputs, i.e. when the data source is unknown (notice that
an infinite sequence is Lempel-Ziv incompressible with
probability one). Therefore for the comparison of
compression algorithms on general sequences, either an
experimental or a formal approach is needed, such as that
used in \cite{LatStr97}. In this paper we follow
\cite{LatStr97} using a worst case approach, that is, we
consider asymptotic performance on
every infinite sequence.

We prove that the performance of plogon compressors,
pushdown compressors  and Lempel-Ziv's compression scheme
is incomparable in the strongest sense. For each two of
these three mechanisms  we construct a sequence that is
compressed optimally in one scheme but is not in the other,
and vice-versa. In all cases the separation is the
strongest possible, i.e. optimal compressibility is
achieved in the worst case (i.e. almost all prefixes of the
sequence are optimally compressible), whereas
incompressibility is present even in the best case (i.e.
only finitely many prefixes of the sequence are
compressible).

For the comparison of  pushdown transducers with both
plogon and Lempel Ziv, we use the most general pushdown
model (where the pushdown compressor need not be invertible
by a pushdown transducer) for incompressibility and the
more restrictive (where the pushdown compressor is required
to be invertible by a pushdown transducer) for
compressibility, thus obtaining the tightest results.

The proofs are interesting by themselves, since the
witnesses of each of the separations proved show the
strengths and drawbacks of each of the compression
mechanisms. For instance pushdown compressors cannot take
advantage of patterns, while Lempel-Ziv algorithm
compresses well even non correlative repetitions, and
plogon machines require extra information to compress this
kind of data.

This paper contains a revised version of the results  in
\cite{AMMP08} and \cite{MayMos09}.

The paper is organized as follows. Section \ref{sec:prel}\
contains some preliminaries. In section \ref{sec:models},
we present pushdown compressors and  plogon compressor
along with  some basic properties and notations, as well as
a review of the Lempel-Ziv (LZ78) algorithm. In section
\ref{sec:comparison} we present our main results. We end
with a brief conclusion on connections and consequences of
these results for effective dimension and prediction
algorithms.

\section{Preliminaries}\label{sec:prel}

    Let us fix some notation for  strings and languages. Let $\Sigma$ be finite alphabet with at least two
    symbols. W.lo.g. we assume that $0, 1 \in\Sigma$.
    A \emph{string} is an element of $\Sigma^{n}$ for some integer $n$ and a \emph{sequence} is an element of $\Sigma^{\infty}$.
    For a string $x$, its length is denoted by $|x|$.
    If $x,y$ are strings, we write $x\leq y$ (called lexicographic order) if $|x|<|y|$ or $|x|=|y|$ and $x$ precedes $y$ in alphabetical order.
    The empty string is
    denoted by $\lambda $. For $S$ $\in $ $\Sigma ^\infty $ and $i,j$
    $\in $ $\mathbb{N}$, we write $S[i..j]$ for the string consisting of
    the $i^{\textrm{th}}$ through $j^{\textrm{th}}$ symbols of $S$, with
    the convention that $S[i..j]=\lambda $ if $i>j$, and $S[1]$ is the
    leftmost symbol of $S$.
    We say string $y$ is a prefix of string (sequence) $x$, denoted $y\sqsubset x$, if there exists a string (sequence) $a$ such that
    $x=ya$. For a string $x$, $x^{-1}$ denotes $x$ written in reverse order.
    For a function $f:A\rightarrow B$, $f(x)=\perp$ means $f$ is not defined on input $x$.
    For a sum $\sum_{j=1}^n a_j$ let $\term(k)$ denote $a_k$. For a function $f$, $f^{(2)}$ denotes $f\circ f$.



    Given a sequence $S$ and a function $T:\Sigma ^* \to \Sigma ^*$,
    the $T$- upper and lower compression ratios of $S$  are given by
    \begin{align*}
        \rho_{T}(S) &= \liminf_{n\rightarrow\infty} \frac{|T(S[1\ldots n])|}{n}     , \text{ and}\\
        R_{T}(S) &= \limsup_{n\rightarrow\infty} \frac{|T(S[1\ldots n])|}{n}.
    \end{align*}

\begin{notation}
We use $K(w)$ to denote the standard (plain) Kolmogorov
complexity, that is, fix a universal Turing Machine $U$.
Then for each string $w\in\Sigma^*$,
\[K(w) = \min\{|p|\,|\,p\in\bool{*}, U(p)=w\}\] i.e.,
$K(w)$ is the size of the shortest binary program that
makes $U$ output $w$. Although some authors use $C(w)$ to
denote (plain) Kolmogorov complexity, we reserve this
notation to denote  a particular compression algorithm $C$
on input $w$.
\end{notation}

\section{Compressors with low resource-bounds}\label{sec:models}

In this section we consider several families of lossless
compression methods that use very low computing resources.
We introduce a detailed definition of stack-computable
compressors together with some variants and review
poly-logarithmic space computable compressors and the
celebrated Lempel-Ziv algorithm.

\subsection{Pushdown compressors}\label{s.pushdown.compressors}
We discuss next different formalizations of information
lossless compressors that are equipped with stack memory.
The most general ones are allowed to use a bounded number
of lambda-rules, that is, stack movements that don't
consume an input symbol. The most restricted pushdown
compressors we consider here are
 visibly pushdown automata that are suitable for XML
compression.

There are several natural variants for the model of
pushdown transducer \cite{AuBeBo07}, both allowing
different degrees of nondeterminism and computing partial
(multi)functions by requiring final state or empty stack
termination conditions. But our purpose here is to compute
a total and well-defined (single valued) function,
therefore nondeterminism should be very limited and natural
termination conditions are equivalent.

The main variants that will influence the computing power
of a pushdown compressor while remaining information
lossless are the presence of lambda-rules, the possible
restrictions of stack movements, and the use of an
endmarker, that is an extra symbol signaling the end of the
finite input.

We will introduce here pushdown compressors, invertible
pushdown compressors, and visibly pushdown compressors
(this last one defined in \cite{AluMad06,KuMaVi07}).

The definitions below are adapted from those in
\cite{AMMP08,MayMos09}.

\begin{definition}A {\itshape bounded pushdown compressor} (BPDC)
is an 8-tuple
\begin{center}
    $C=(Q, \Sigma , \Gamma , \delta , \nu , q_0,  z_0, c)$
\end{center}
where\begin{itemize}
\item $Q$ is a finite set of states
\item $\Sigma$ is the finite input/output alphabet
       \item $\Gamma $ is the finite stack alphabet
       \item $\delta :Q\times (\Sigma \cup \{\lambda \})\times \Gamma \rightarrow Q\times \Gamma
       ^*$ is the transition function
       \item $\nu :Q\times \Sigma \times \Gamma \rightarrow
       \Sigma ^*$ is the output function
       \item $q_0$ $\in $ $Q$ is the initial state
       \item $ z_0$ $\in $ $\Gamma $ is the start stack symbol
       \item $c\in\N$ is an upper bound on the number of
       $\lambda$-rules per input symbol.
     \end{itemize}
\end{definition}

We use $\delta _Q$ and  $\delta _{\Gamma ^*}$ for the projections of  
function $\delta$. We restrict $\delta $ so that $ z_0$
cannot be removed from the stack bottom, that is, for every
$q$ $\in $ $Q$, $b$ $\in $ $\Sigma \cup \{\lambda \}$,
either $\delta (q,b, z_0)=\perp $, or $\delta (q,b,
z_0)=(q',v z_0)$, where $q'$ $\in $ $Q$ and $v$ $\in $
$\Gamma ^*$.

Note that the transition function $\delta $ accepts
$\lambda $ as an input character in addition to elements of
$\Sigma$, which means that $C$ has the option of not
reading an input character while altering the stack, such a
movement is called a {\sl $\lambda$-rule}. In this case
$\delta (q, \lambda, a)=(q', \lambda )$, that is, we pop
the top symbol of the stack. To enforce determinism, we
require that at least one of the following hold for all $q$
$\in $ $Q$ and $a$ $\in $ $\Gamma $:\begin{itemize}
            \item $\delta (q,\lambda ,a)=\perp$,
            \item $\delta (q,b,a)=\perp $ for all $b$ $\in $ $\Sigma$.
            \end{itemize}
            We {\sl restrict the number of $\lambda$-rules} that can be
            applied as follows: between the input symbols in positions $n$
            and $n+1$ a maximum of
            $c$ $\lambda$-rules can be applied.

We first consider the transition function $\delta$ as
having inputs in  $Q\times (\Sigma \cup \{\lambda \})\times
\Gamma^+$, meaning that only the top symbol of the stack is
relevant. Then we use the extended transition function
$\delta ^{*}:Q\times \Sigma^* \times \Gamma ^+\rightarrow
Q\times \Gamma^*$, defined recursively as follows. For $q$
$\in $ $Q$, $v$ $\in $ $\Gamma ^+$, $w$ $\in $ $\Sigma^*$,
and $b$ $\in $ $\Sigma$
\begin{center}
    $\delta ^{*}(q, \lambda , v)=$$\left\{
                                      \begin{array}{ll}
                                        \delta ^{*}(\delta_Q(q, \lambda , v),
\lambda , \delta _{\Gamma ^*}(q, \lambda , v)), & \hbox{if $\delta (q, \lambda , v)\neq \perp $;} \\
                                        (q, v), & \hbox{otherwise.}
                                      \end{array}
                                    \right.$
\end{center}
\begin{center}  
     $\delta ^{*}(q, wb, v)=$$\left\{
                               \begin{array}{ll}

\delta^*(\delta_Q(\delta^*_Q(q, w, v), b, \delta^*_{\Gamma
^*}(q, w, v)), \lambda, \delta_{\Gamma ^*}(\delta^*_Q(q, w,
v), b, \delta^*_{\Gamma ^*}(q, w, v)) ),  \\\ \ \ \ \ \ \
\hbox{if $\delta ^*(q, w, v)\neq \perp $ and
$\delta(\delta^*_Q(q, w, v), b, \delta^*_{\Gamma ^*}(q,
w, v))\neq \perp $;} \\
                                 \perp , \ \ \  \hbox{otherwise.}
                               \end{array}
                             \right.$
\end{center}

That is, $\lambda $-rules are implicit in the definition of
$\delta ^{*}$. We abbreviate $\delta ^{*}$ to $\delta $,
and $\delta (q_0,w, z_0)$ to $\delta (w)$. We define the
{\itshape output} from state $q$ on input $w \in \Sigma ^*$
with $z \in \Gamma ^*$ on the top of the stack by the
recursion $\nu (q,\lambda ,z)= \lambda ,$
\begin{center}
    $\nu(q,wb,z) = \nu(q,w,z)\, \nu(\delta _Q(q,w,z), b, \delta _{\Gamma ^*}(q,w,z)).$
\end{center}
The {\itshape output} of the compressor $C$ on input $w$
$\in $ $\Sigma ^*$ is the string $C(w)=\nu (q_0,w, z_0)$.

The input of an information-lossless compressor can be
reconstructed from the output and the final state reached
on that input.

\begin{definition} A BPDC $C=(Q, \Sigma , \Gamma , \delta , \nu
, q_0,  z_0, c)$ is {\itshape information-lossless} (IL) if
the function
\begin{eqnarray*}
    \Sigma ^*&\rightarrow &\Sigma ^*\times Q\\
w&\mapsto &(C(w),\delta _Q(w))\\
\end{eqnarray*}
is one-to-one. An {\itshape information-lossless pushdown
compressor} (ILPDC) is a BPDC that is IL.
\end{definition}

Intuitively, a BPDC {\itshape compresses} a string $w$ if
$|C(w)|$ is significantly less than $|w|$. Of course, if
$C$ is IL, then not all strings can be compressed. Our
interest here is in the degree (if any) to which the
prefixes of a given sequence $S$ $\in $ $\Sigma ^\infty $
can be compressed by an ILPDC.

We will also consider PDC that have endmarkers, a
characteristic that can achieve a better compression rate.

\begin{definition} An {\itshape information-lossless pushdown
compressor with endmarkers} (ILPDCwE) is a BPDC $C=(Q,
\Sigma\cup\{\$\} , \Gamma , \delta , \nu , q_0,  z_0, c)$
with input alphabet $\Sigma\cup\{\$\}$ ($\$\not\in\Sigma$)
such that the function
\begin{eqnarray*}
    \Sigma ^*&\rightarrow &\Sigma ^*\times Q\\
w&\mapsto &(C(w\$),\delta _Q(w))\\
\end{eqnarray*}is one-to-one.
\end{definition}

Notice that the use of endmarkers can improve compression.
In particular each ILPDC is a particular case of  ILPDC
with endmarkers, but there are ILPDC with endmarkers that
perform better than usual ILPDC.

We will denote as pushdown compression ratio the concept
corresponding to the most general family of pushdown
compressors, those that use endmarkers.

\begin{notation} The best-case {\itshape pushdown compression ratio}
of a sequence $S$ $\in $ $\Sigma ^\infty $ is
 $\rho_{PD}(S)=$ $\inf \{\roC(S)\mid \mbox{ $C$ is an ILPDCwE}\}$.

{\noindent The worst-case {\itshape pushdown compression
ratio} of a sequence $S$ $\in $ $\Sigma ^\infty $ is
    $R_{PD}(S)=$ $\inf \{R_C(S)\mid \mbox{ $C$ is an ILPDCwE}\}$.}
\end{notation}

Notice that so far we have not required that the
computation should be invertible by another pushdown
transducer, which is a natural requirement for practical
compression schemes. The standard PD compression model does
not guarantee the decompression to be feasible and it is
currently
    not known whether the exponential time brute force inversion can even be improved to polynomial time.
To guarantee both decompression and compression to be
feasible, we require the existence of a PD machine
    that given the compressed string (and the final state), outputs the decompressed one. This yields two PD compression schemes, the standard one (PD)
    and invertible PD. Contrary to Finite State computation, it is not known whether both are equivalent. This is by no means a limitation, since all results
    in this paper are always stated in the strongest form, i.e. we obtain results of the form ``X beats PD'' and ``invertible PD beats X''.

    Here is the definition of invertible PD compressors. We want
    this definition to be the most restrictive one and therefore
    regular ILPDC.

    \begin{definition}
        $(C,D)$ is an invertible PD compressor (denoted $\ipd$) if $C$ is an ILPDC and $D$ is a PD transducer s.t.
        $D(C(w), \delta_Q(w))=w$, i.e. $D$, given both $C(w)$ and the final state, outputs $w$.
    \end{definition}

\begin{notation} The best-case {\itshape invertible pushdown compression ratio}
of a sequence $S$ $\in $ $\Sigma ^\infty $ is
 $\rho_{\ipd}(S)=$ $\inf \{\roC(S)\mid \mbox{ $C$ is an }\ipd\}$.

{\noindent The worst-case {\itshape invertible pushdown
compression ratio} of a sequence $S$ $\in $ $\Sigma ^\infty
$ is
    $R_{\ipd}(S)=$ $\inf \{R_C(S)\mid \mbox{ $C$ is an } \ipd\}$.}
\end{notation}

\comment{ In section \ref{section:last} in order to make
our result stronger we will also consider PDC that have
endmarkers, a characteristic that can achieve a better
compression rate. In this case the information lossless
(IL) condition is relaxed to the function
\begin{eqnarray*}
    \Sigma ^*&\rightarrow &\Sigma ^*\times Q\\
w&\mapsto &(C(w\$),\delta _Q(w))\\
\end{eqnarray*}
being one-one, for \$ a fixed symbol out of $\Sigma$. }

We end this section with the concept of visibly pushdown
automata from \cite{AluMad06,KuMaVi07} that is extensively
used in the compression of XML.

A {\itshape visibly pushdown compressor} (visiblyPD) is an
information-lossless pushdown compressor  for which  the
input alphabet has three types of symbols, call symbols,
return symbols, and internal symbols. The main restriction
is that while reading a call, the automaton must push one
symbol, while reading a return symbol, it must pop one
symbol (if the stack is non-empty), and while reading an
internal symbol, it can only update its control state.

Therefore the compression ratio attained by visibly
pushdown automata is an upper bound on the compression
ratio attained through the pushdown compressors defined
above.

\subsection{plogon compressors}

 We introduce  the family of compressors that can be computed online with at most poly-logarithmic space.
 Notice that these resource bounds correspond to those of the data stream model \cite{AlMaSz99,IndWoo05},
 where the input size is massive in comparison with the
 available memory, and the input can only be read once.

    \begin{definition}(Hartmanis, Immerman, Mahaney \cite{HaImMa78})
    A Turing machine  $M$ is a plogon transducer if it has the following properties, for each input string $w$
    \begin{itemize}
        \item   the computation of $M(w)$ reads its input from left to right (no turning back),
        \item $M(w)$ is given $|w|$ written in binary (on a special tape),
        \item $M(w)$ writes the output from left to right on a write-only output tape,
        \item $M(w)$ uses memory bounded by $\log(|w|)^c$, for a constant $c$.
    \end{itemize}
    \end{definition}

We denote with plogon the class of plogon transducers.

    Note that contrary to Finite State transducers, a plogon transducer is not necessarily a mere extender, i.e., there is a plogon transducer $M$ and
    strings $w, x$ such that $M(wx)\not\sqsupset M(w)$.

    \begin{definition}
    A plogon transducer $C:\Sigma^{*}\rightarrow\Sigma^{*}$ is an information lossless compressor (ILplog)
    if it is 1-1.
    \end{definition}

\begin{notation} The best-case {\itshape plogon compression ratio}
of a sequence $S$ $\in $ $\Sigma ^\infty $ is
 $\rho_{\pl}(S)=$ $\inf \{\roC(S)\mid \mbox{ $C$ is an ILplog}\}$.

{\noindent The worst-case {\itshape plogon compression
ratio} of a sequence $S$ $\in $ $\Sigma ^\infty $ is
    $R_{\pl}(S)=$ $\inf \{R_C(S)\mid \mbox{ $C$ is an ILplog}\}$.}
\end{notation}

\subsection{Lempel Ziv compression scheme}

 Let us give a brief description of the classical LZ78 algorithm  \cite{LemZiv78}. Given an input
$x\in\Sigma^*$, LZ parses $x$ in different phrases $x_i$,
    i.e., $x=x_1 x_2 \ldots x_n$ ($x_i\in\Sigma^*$) such that
    every prefix $y\sqsubset x_i$, appears before $x_i$ in the parsing (i.e. there exists $j<i$ s.t. $x_j=y$).
    Therefore for every $i$, $x_i = x_{l(i)}b_i$ for $l(i)<i$ and
    $b_i\in\Sigma$. We sometimes denote the number of phrases in the
    parsing of $x$ as $P(x)$.
After step $i$ of the algorithm, the $i$ first phrases
$x_1,\dots,x_i$ have been parsed and stored in the
so-called \emph{dictionary}. Thus, each step adds one word
to the dictionary.

    LZ encodes $x_i$ by a prefix free encoding of ${l(i)}$ and the      
    symbol $b_i$, that is, if $x=x_1 x_2 \ldots x_n$ as before, the
    output of LZ on input $x$ is
    \[LZ(x)= c_{l(1)}b_1 c_{l(2)}b_2 \ldots c_{l(n)}b_n\] where
    $c_i$ is a prefix-free coding of $i$ (and $x_0=\lambda$).

    For a string $z=xy$ we denote by $LZ(y|x)$ the output of LZ on $y$ after having read $x$ already.

    LZ is usually restricted to the binary alphabet, but the
    description above is valid for any alphabet $\Sigma$.

   \comment{ Let us give a brief description of the classical LZ78 algorithm \cite{LemZiv78}. Fix a set of code words $C$, i.e.
    a set of strings $C=\{ c_1 < c_2 < \ldots \}$ such that no $c_i$ is a prefix of a string
    $c_j$ (for any $i,j\in\N$). Given an input $x\in\bool{*}$, LZ parses $x$ in phrases $x_i$'s,
    i.e. $x=x_1 x_2 \ldots x_n$ ($x_i\in\bool{*}$) where if $x_i$ is a phrase in the parsing,
    then every prefix $y\sqsubset x_i$, appears before $x_i$ in the parsing. We sometimes denote the
    parsing separating the phrases by commas: $x_1, x_2, \ldots ,x_n$.
    Note that the parsing of $x$ is unique. LZ encodes $x$ by replacing every phrase $x_i$ in the parsing
    by a code word $c_{l(i)}$ followed by a bit $b_i$, where $l(i)$ is the position of the longest prefix
    of $x_i$ in the parsing, and $b_i$ is the last bit of the phrase, i.e.
    $x_i = x_{l(i)}b_i$. These completely specify the phrase being encoded.
    We denote the output of LZ on input $x$ by $LZ(x)$.}

\section{The performances of the LZ78 algorithm, plogon compressors and pushdown compressors are incomparable}\label{sec:comparison}

In this section we prove that the two families of
compressors we have introduced,  pushdown and plogon
compressors, and the Lempel Ziv compression scheme, are all
incomparable. That is, for any pair among those three,
there are different individual sequences on which one is
outperformed by the other and vice versa. In all cases we
get low worst-case rate ($\rho$) for one method versus high
best-case rate ($R$) for the other, i.e. the widest
possible separation between them.

\subsection{Lempel Ziv  beats Pushdown compression}

Our first result shows that there is a sequence that our
most general family of pushdown compressors cannot compress
and that is optimally compressible by Lempel Ziv.

The proof is based on two intuitions, that require a
careful analysis. The first one is that from a few
Kolmogorov-random strings a much longer
pushdown-incompressible string can be constructed. On the
other hand, a sequence with enough (and non-consecutive)
repeated substrings can be compressed optimally by
Lempel-Ziv.

\begin{theorem}\label{theo:first}
There exists a sequence $S$ such that $$R_{LZ}(S)=0$$ and
$$\rho _{PD}(S)=1.$$
\end{theorem}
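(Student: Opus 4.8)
The plan is to construct $S$ as a concatenation of carefully chosen blocks and to control two things simultaneously: that Lempel-Ziv compresses $S$ to ratio $0$ in the worst case (every sufficiently long prefix), while no information-lossless pushdown compressor with endmarkers can beat ratio $1$ even in the best case (only finitely many prefixes compress). The central tension, flagged by the authors themselves, is that pushdown compressors ``cannot take advantage of patterns'' whereas LZ ``compresses well even non-correlative repetitions.'' So I would exploit exactly this: build $S$ from blocks whose \emph{internal} content is pushdown-incompressible (essentially Kolmogorov-random, made long via the first intuition in the excerpt) but whose \emph{repetition structure} across the sequence is rich enough that LZ's dictionary quickly captures it.

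Concretely, I would first fix a growing sequence of lengths $n_1 < n_2 < \cdots$ and, for each stage $k$, take a short Kolmogorov-random seed $r_k$ with $K(r_k) \geq |r_k|$ and from it manufacture a much longer string $w_k$ that is pushdown-incompressible: the point is that a pushdown compressor has only finitely many states and a bounded number of $\lambda$-rules per symbol, so a counting/incompressibility argument (in the style of the Lempel-Ziv-beats-finite-state lower bound, adapted to bounded stack height per symbol) shows that on a random-looking $w_k$ the output length is at least $(1-o(1))|w_k|$. I would then define the stage-$k$ block of $S$ to be many repeated copies of $w_k$, arranged (possibly with separators) so that $S = w_1^{m_1}\, w_2^{m_2}\, \cdots$ with the repetition counts $m_k$ and lengths $n_k$ growing fast. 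For the LZ side I would use the standard fact that if a long string is built by repeating a dictionary of few distinct phrases, then the number of LZ phrases $P$ is small relative to the length, and since $|LZ(x)| = O(P(x)\log P(x))$, the ratio $|LZ(S[1..n])|/n$ tends to $0$; the repetition counts $m_k$ must dominate $n_k$ enough that this holds for \emph{every} prefix length, giving $R_{LZ}(S)=0$.

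For the pushdown lower bound I would argue that $\rho_{PD}(S) = 1$, i.e. for every ILPDCwE $C$, $\liminf_n |C(S[1..n])|/n \geq 1$ (and trivially $\leq 1$ since $C$ is IL, so it must equal $1$). The key is that a pushdown compressor processing a long concatenation $w_k^{m_k}$ cannot meaningfully exploit the repetitions: with bounded $\lambda$-rules and a finite control, the stack behaviour over one period of $w_k$ can be captured by a finite ``profile'', and an information-lossless machine whose net stack change per period is bounded must essentially re-encode each copy at near full length, because the incompressibility of $w_k$ prevents any fixed-state transducer from shrinking it. I would make this rigorous by a counting argument: if $C$ compressed infinitely many prefixes below ratio $1-\varepsilon$, one could reconstruct the corresponding random seeds $r_k$ from descriptions shorter than $|r_k|$, contradicting $K(r_k) \geq |r_k|$, once the overhead of encoding the state and the (bounded) stack-profile information is accounted for.

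The main obstacle, and the part requiring the most care, is precisely this pushdown incompressibility argument in the presence of a genuinely \emph{unbounded} stack: the compressor could in principle push a large amount of information about an early copy of $w_k$ onto the stack and pop it later to aid in encoding a subsequent copy, which is exactly the pattern-exploitation I need to rule out. I expect to control this by using the bound $c$ on $\lambda$-rules per input symbol (so stack height cannot grow faster than linearly and, more importantly, the amount of stack material consumed per input symbol is bounded), together with the information-lossless constraint that ties the output plus final state plus stack contents to a one-to-one encoding of the input. The delicate counting must show that the total description — output bits, plus the final control state, plus whatever is buried in the stack — cannot be shorter than the input for the random blocks, so that any apparent savings on one copy is paid back elsewhere. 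Getting the bookkeeping on stack contents versus output length exactly right, so that the net compression ratio provably stays at $1$ in the $\liminf$, is the crux of the proof.
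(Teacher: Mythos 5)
Your LZ half is sound and matches the paper's toolkit: for consecutive repetitions $w^m$ the number of distinct LZ phrases is $O(\sqrt{|u|\cdot|w|})$ (the paper's Lemma~\ref{LZ is OK_finite}), so with $m_k$ growing fast enough every prefix has ratio tending to $0$. The genuine gap is in the pushdown lower bound. Your construction $S=w_1^{m_1}w_2^{m_2}\cdots$ puts \emph{exact consecutive repetitions} of each incompressible block into $S$, and the counting argument you sketch --- ``if $C$ compressed a copy, one could reconstruct the random seed from a short description'' --- collapses after the first copy: to reconstruct a block from the compressor's behaviour on it, the decoder needs the compressor's configuration at the start of that copy (state plus the portion of the stack reachable while reading it), and that accessible stack portion can be as long as the block itself and can literally contain the previous copy of $w_k$ (pushed during the preceding $|w_k|$ input symbols). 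So the description is not short relative to $K(w_k)$, and no contradiction with randomness arises; indeed $K(w_k\mid w_k^{j})=O(\log)$, so no argument of this conditional-incompressibility type can work once the block has been seen. Your proposed fix via a ``finite stack profile per period'' is not available either, since the stack alphabet is finite but the stack content over one period is a string of length $\Theta(|w_k|)$, not a bounded object.

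The paper engineers around exactly this. It takes a \emph{single} random string $x=x_1\cdots x_{n^2}$ of length $n^3$ (a pool of $n^2$ fresh blocks of length $n$) and builds $S_n$ by appending pool blocks one at a time, each time choosing a block $x_{i_j}$ on which \emph{every} compressor among the first $\log n$ must output at least $n-2\sqrt n$ symbols \emph{in its current context}. Existence of such a block at each step is a counting argument against $K(x)\ge n^3\log|\Sigma|$: the context the decoder needs is only the state plus the top $n\log^{(2)}n$ stack symbols (this is where the bound on $\lambda$-rules is essential --- the compressor cannot dig deeper while reading one block), which is $O(n\log^{(2)}n)\ll n^3$, so if too many pool elements were compressible in that one context, $x$ itself would have a description shorter than $n^3$. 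Blocks do repeat in $S_n$ (length $\approx n^5$ from a pool of $n^2$ blocks), which is what LZ exploits via a count of distinct substrings of each length; but each occurrence is re-certified incompressible in its own context, and the adaptive one-block-at-a-time selection means only one short context ever has to be paid for per application of the counting claim. If you want to salvage your outline, you must replace $w_k^{m_k}$ by this kind of pool-based, adaptively ordered concatenation; as written, the PD side does not go through.
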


\begin{proof}
    Consider the sequence $S=S_1S_2\ldots$ where $S_n$ is constructed as follows.
    Let $x=x_1 x_2 \ldots x_{n^2}$ ($|x_i|=n$) be a Kolmogorov-random string with $K(x)\geq n^3 \log |\Sigma|$.
    Let
    $$S_n= x_{i_1}\ldots x_{i_{l}}$$
    where $i_j \in\{1,\ldots, n^2\}$ for every $1\leq j \leq l$ are indexes, defined later on.
    Let $$l=\frac{1}{n}\sum_{k=1}^{n}k\min(|\Sigma|^k,n^{\frac{2k}{n}+1})$$
    so that
    \begin{equation} \label{e.sn1}
    |S_n|= nl = \sum_{k=1}^{n}k\min(|\Sigma|^k,n^{\frac{2k}{n}+1}).
    \end{equation}
    Let us show that for every $\epsilon>0$ and for $n$ large enough
    \begin{equation} \label{e.sn2}
    n^{5-\epsilon} \leq |S_n| \leq n^{5}.
    \end{equation}
    We prove the first inequality.
    $$|S_n|= \sum_{k=1}^{n}k\min(|\Sigma|^k,n^{\frac{2k}{n}+1}) \leq n \term(n) \leq n \cdot n \cdot n^{\frac{2n}{n}+1} = n^5. $$
    For the second inequality we have
    \begin{align*}
    |S_n| &= \sum_{k=1}^{n}k\min(|\Sigma|^k,n^{\frac{2k}{n}+1}) \\
    &\geq \sum_{k=(1-\frac{\epsilon}{4}) n}^{n}k\min(|\Sigma|^k,n^{\frac{2k}{n}+1}) \\
    &\geq \frac{n\epsilon}{4} \term((1-\frac{\epsilon}{4}) n) \\
    &\geq n^{5-\epsilon}.
    \end{align*}

        Let $C_1, C_2,\ldots$ be an enumeration of all  ILPDCwE such that
    $C_i$ can be encoded in at most $i$ bits and such that
            a maximum of
            $\log^{(2)}i$ $\lambda$-rules can be applied per symbol.
    The following claim shows that there are many $C$-incompressible
    strings $x_i$.

    \begin{claim}\label{c.many.incompressible}
    Let $F_n=\{C_1,\ldots,C_{\log n}\}$.
    Let $w\in\Sigma^*$.
    \begin{enumerate}
\item Let
    $C\in F_n$.    There are at least $(1-\frac{1}{2\log n})n^2$ strings $x_i$
($1\leq i \leq n^2$) such that
    $$|C(wx_i)|-|C(w)| > n-2\sqrt{n} .$$
    \item There is a  string $x_i$ such that  for every  $C\in F_n$, $$|C(wx_i)|-|C(w)| > n-2\sqrt{n} .$$
\end{enumerate}
    \end{claim}
\begin{proofof}{of Claim \ref{c.many.incompressible}}
After having read $w$,
    $C$ is in state $q$, with stack content $yz$, where $y$ denotes the $n\log^{(2)} n$
    topmost symbols of the stack (if the stack is shorter then $y$ is the whole stack).
    It is clear that while reading an $x_{i}$, $C$ will not pop the stack below $y$.

    Let $T=(1-\frac{1}{2\log n})n^2$, and let $C(q,yz,x_i\$)$
    denote the output of $C$ when started in state $q$ on input $x_i\$$ with stack content $yz$.
    Suppose the claim false, i.e.
    there exist more than $n^2-T$ words $x_i$ such that $C(q,yz,x_i\$)=p_i$, ends in state $q_i$,
    and $|p_i|\leq n-2\sqrt n+O(1)$ (notice that the output on symbol $\$$ is $O(1)$). Denote by $G$ the set of such strings $x_i$.
    This yields the following short  program  for $x$ (coded with alphabet $\Sigma$):
    $$p=(n,C,q,y,a_1t_1a_2t_2\ldots a_{n^2}t_{n^2})$$
    where each comma costs less than $3\log |s|$,   where $s$ is the element between two commas;
    $a_i=1$ implies $t_i=x_i$, $a_i=0$ implies $x_i\in G$ and
    $t_i = d(q_i)01d(|p_i|)01p_i$ (where $d(z)$ for any string $z$, is the string written with every symbol doubled),
    i.e. $|t_i|\leq n - \sqrt n$.
    $p$ is a program for $x$: once $n$ is known, each $a_i t_i$ yields either $x_i$ (if $a_i=1$)
    or $(p_i,q_i)$ (if $a_i=0$). From $(p_i,q_i)$, simulating $C(q,yz,u\$)$ for each $u\in\Sigma^n$
    yields the unique $u=x_i$ such that $C(q,yz,u\$)=p_i$ and ends in state $q_i$. The simulations are possible,
    because $C$ does not read its stack further than $y$, which is given.
    We have
    \begin{align*}
    |p|&\leq O(\log n) + n\log^{(2)}n+(n+1)T+(n^2-T)(n-\sqrt n)\\
    &\leq O(n^2) + n^3 - \frac{n^{2.5}}{2\log n}\\
    &\leq n^3 -\frac{n^{2.5}}{4\log n} \\
    \end{align*}
    which contradicts the randomness of $x$, thus proving part 1. 

    Let $W_j$ be the set of strings $x_i$ that are compressible by $C_j$; by
        1.,
    $|W_j|\leq n^2/2\log n$. Let $R=\{x_{i}\}_{i=1}^{n^2} - \cup_{j=1}^{\log n}W_j$
    be the set of strings incompressible by all $C\in F_n$. We have
    $$|R|\geq n^2 - \log n \cdot n^2 / 2\log n =n^2/2> 1.$$
    This proves part 2.
\end{proofof}

    We finish the definition of $S_n$ by picking
     $x_{i_1}$ to be the first string fulfilling the second part of Claim \ref{c.many.incompressible} for $w=S_1S_2\ldots S_{n-1}$.
    The construction is similar for all strings $\{x_{i_j}\}_{j=2}^{l}$, by taking $w=S_1S_2\ldots S_{n-1}x_{i_1}\ldots x_{i_{j-1}}$, thus ending the construction
    of $S_n$.

    Let us show that $\rho_{PD}(S)=1$. Let $\epsilon>0$.
    Let $C=C_k$ be an ILPDCwE; then for almost every $n$, and for all $0\leq t \leq
    |S_n|/n$, $0\le i<n$
    we have
    \begin{align*}
    &\frac{|C(S_1\ldots S_{n-1}S_n[1\ldots tn+i]\$)|}{|S_1\ldots S_{n-1}S_n[1\ldots tn+i]|}\\
    &\geq   \frac{\sum_{j=k}^{n-1} (j-2\sqrt j)|S_j|/j + t(n-2\sqrt n)-O(1)}{\sum_{j=1}^{n-1}|S_j|+(t+1)n}\\
    &\geq 1 - \frac{\sum_{j=1}^{k-1}|S_j|}{\sum_{j=1}^{n-1}|S_j|+(t+1)n} - 2\frac{\sum_{j=k}^{n-1}|S_j|/\sqrt j}{\sum_{j=1}^{n-1}|S_j|+(t+1)n}
    -2 \frac{t\sqrt n+ n/2}{\sum_{j=1}^{n-1}|S_j|+(t+1)n}\\
    &\geq 1 - \epsilon/4 - O(1) \frac{\sum_{j=1}^{n-1}j^{4.5}}{\sum_{j=1}^{n-1}j^{5-\delta}}  -\epsilon/4 \quad\quad\quad \text{(by Equation \ref{e.sn2})}\\
    &\geq 1 - \epsilon/2 - \frac{O(1)(n-1)\term(n-1)}{\frac{n}{3} \term(\frac{n}{3})}\\
    &\geq 1 - \epsilon/2 - \frac{O(1)(n-1)(n-1)^{4.5}}{\frac{n}{3} (\frac{n}{3})^{5-\delta}}\\
    &\geq 1 - \epsilon/2 - \epsilon/2 = 1 - \epsilon \quad\quad\quad \text{(choosing $\delta=0.1$)}
    \end{align*}
    i.e. $\rho_{PD}(S)=1$.

    We show that $R_{LZ}(S)=0$. Suppose LZ has already parsed input $S_1\ldots S_{n-1}$, and has
    $d_n$ words in its dictionary ($d_n\leq n |S_n|$).
    Let $P$ be the parsing of $S_n$ by LZ,
    let $t_P$ be the size of the largest string in $P$ and let $1\leq k \leq t_P$.
    Let us compute the maximum number of strings of size $k$ in $P$. Any string $u$ of size
    $k$ in a parsing of $S_n$ is of the form
    $$u= x_{t_1}[t\ldots n] x_{t_2}\ldots x_{t_{k/n}}$$
    i.e. amounts to choose $k/n$ strings $x_{t_i}$ and the position $1\leq t \leq n$ where $u$ starts in $x_{t_1}$.
    Therefore there are at most $\#k=n \cdot (n^2)^{k/n}= n^{1+2k/n}$ such words $u$ of size $k$.

    Let $P_w$ be the worst-case parsing of $S_n$, that starts on an empty dictionary and parses all possible strings of size $k$
    in $S_n$ (for every $k\leq t_w$), where $t_w$ is the size of the largest string in $P_w$ i.e.,
    $\min(|\Sigma|^1, n^{1+2/n})$ strings of size one are parsed, followed
    by $\min(|\Sigma|^2,n^{1+4/n})$ strings of size 2, \ldots, followed by $\min(|\Sigma|^k, n^{1+2k/n})$ strings of size $k$, and so on.
    Because
    $$\sum_{k=1}^{n}k\min(|\Sigma|^k,n^{\frac{2k}{n}+1}) = |S_n|$$
    we have $t_w \leq n$.

    Let $p$ (resp. $p_w$) be the number of phrases in $P$ (resp. $P_w$). We have
    $p\leq p_w$, and
    $|LZ(S_n|S_1\ldots S_{n-1})| \leq p\log(p+d_n)$.
    Since
    $$p_w = \sum_{k=1}^{t_w}\min(|\Sigma|^k,n^{\frac{2k}{n}+1} )\leq n\term(n)=n^4$$
    we have
    $$
        |LZ(S_n|S_1\ldots S_{n-1})| \leq n^4 \log (n^4+n|S_n|) \leq n^{4+\alpha}
    $$
    where $\alpha>0$ can be arbitrary small.

    Let $0\leq t \leq |S_n|/n$, $0\le i<n$.
    We have
    \begin{align*}
        \frac{|LZ(S_1\ldots S_{n-1}S_n[1\ldots tn+i])|}{|S_1\ldots S_{n-1}S_n[1\ldots tn+i]|}
        &\leq
        \frac{\sum_{j=1}^{n-1}|LZ(S_j|S_1\ldots S_{j-1})|+|LZ(S_n|S_1\ldots S_{n-1})|}
        {\sum_{j=1}^{n-1}|S_j|}\\
        &\leq\frac{\sum_{j=1}^{n-1}|LZ(S_j|S_1\ldots S_{j-1})|}
        {\sum_{j=1}^{n-1}|S_j|} + \frac{n^{4+\alpha}}
        {\sum_{j=1}^{n-1}|S_j|}\\
        &\leq \frac{\sum_{j=1}^{n-1}j^{4+\alpha}}{\sum_{j=1}^{n-1}j^{5-\delta}} +
        \frac{n^{4+\alpha}}{\sum_{j=1}^{n-1}j^{5-\delta}} \\
        &\leq \epsilon/2 + \epsilon/2 \leq \epsilon
    \end{align*}
    i.e. $R_{LZ}(S)=0$.
\end{proof}

\subsection{Lempel Ziv beats plogon compressors}\label{sec:lz}

The Lempel Ziv algorithm can also surpass plogon
compressors. Our second comparison detects sequences on
which Lempel-Ziv achieves optimal compression whereas a
plogon compressor has the worst possible performance. The
construction is based on repetition of Kolmogorov random
strings. We show that Lempel-Ziv works well on any repeated
pattern, whereas in polylogarithmic space big patterns
cannot be stored.

\begin{theorem}\label{t.lz.1}
    There exists a sequence $S$ such that
    $$R_{LZ}(S) = 0\quad \text{and }\rho_{\pl}(S)=1.$$
\end{theorem}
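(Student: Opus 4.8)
The plan is to build $S$ as a concatenation $S = S_1 S_2 \cdots$ where each block $S_n$ consists of many \emph{repetitions} of a single Kolmogorov-random string $r_n$ of length $\ell_n$, i.e.\ $S_n = r_n^{m_n}$ for suitable $m_n$. The two requirements pull in opposite directions, and the choice of parameters $\ell_n, m_n$ must satisfy both simultaneously. For $R_{LZ}(S)=0$ I want Lempel-Ziv to exploit the repetition: once LZ has parsed $r_n$ into its dictionary (costing essentially $\ell_n$ symbols), each subsequent copy of $r_n$ is parsed into very few, very long phrases, so that over the whole block $S_n$ the number of phrases $P(S_n)$ grows only like $O(\sqrt{|S_n|})$ (the classical bound for a string built of repeated patterns), giving $|LZ(S_n \mid S_1\cdots S_{n-1})| \le P(S_n)\log(P(S_n)+d_n)$ which is $o(|S_n|)$. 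For $\rho_{\pl}(S)=1$ I want to show that \emph{no} ILplog compressor $C$ can do better than trivial on $S$: intuitively, to compress a long repeated random pattern one must remember the pattern, but $r_n$ has length $\ell_n$ far exceeding the $\log^c$ space available, so the compressor cannot ``recognize'' the repetition and is forced to re-read each copy as if fresh random data.

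First I would fix the block sizes following the template of Theorem \ref{theo:first}: choose $|S_n|$ to grow polynomially (say between $n^{5-\epsilon}$ and $n^5$) so that the tail block dominates all previous blocks combined, making the per-block ratios control the limit ratios $R_{LZ}$ and $\rho_{\pl}$. I would then pick $r_n$ to be a Kolmogorov-random string of length $\ell_n$ chosen large relative to any fixed polylog space bound but small relative to $|S_n|$, and set $m_n = |S_n|/\ell_n$ to be the number of repetitions, with $m_n$ large (super-polynomial in $\ell_n$) so the repetition is genuinely exploitable by LZ.

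The heart of the argument is the plogon incompressibility, and this is where I expect the main obstacle. The key lemma to prove is an analogue of Claim \ref{c.many.incompressible}: for any fixed ILplog compressor $C$ using space $s = \log^c(\cdot)$, and for a random $r_n$, the output of $C$ on a long prefix of $S_n = r_n^{m_n}$ has length at least $(1-o(1))$ times the prefix length. The mechanism is a simulation/incompressibility argument. At the boundary between two copies of $r_n$, the machine $C$ is in some configuration describable by at most $s$ bits of workspace plus $O(\log |w|)$ bits of head/counter information. Because $C$ reads left to right and cannot store $r_n$ (since $\ell_n \gg s$), I would argue that if $C$ compressed a block of copies significantly, then from the short description of $C$, the $s$-bit boundary configuration, and the compressed output, one could reconstruct $r_n$ using fewer than $\ell_n \log|\Sigma| - O(\ell_n/\log\ell_n)$ bits, contradicting the randomness of $r_n$. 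The delicate point is that $C$ sees $m_n$ copies, not one, so the bookkeeping must account for the sequence of boundary configurations; but since each is only $s + O(\log|w|)$ bits and $s$ is polylogarithmic, the total configuration information across all copies in a prefix is negligible compared to the total information $m_n \ell_n \log|\Sigma|$ that would have to be encoded. Making this counting precise, and handling the fact that the plogon model writes output online (so I must relate the output length to a valid injective decoding), is the technical crux. Once the lemma is in place, the final ratio computation mirrors exactly the telescoping estimates at the end of Theorem \ref{theo:first}: the dominant block forces $\rho_{\pl}(S)\ge 1-\epsilon$ for every $\epsilon$, hence $\rho_{\pl}(S)=1$, while the phrase-counting bound forces $R_{LZ}(S)\le\epsilon$ for every $\epsilon$, hence $R_{LZ}(S)=0$.
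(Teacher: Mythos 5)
Your proposal follows essentially the same route as the paper: $S$ is built from blocks of repeated Kolmogorov-random strings (the paper uses $R_n^{n^c}$ with $|R_n|=n$), the LZ side is handled by the $p\le\sqrt{2l|w|}$ phrase-count bound for periodic words (the paper's Lemma \ref{LZ is OK_finite}), and the plogon lower bound comes from reconstructing the random string from the polylog-size boundary configurations plus the output, using injectivity of $C$ (the paper's Lemma \ref{l.lower.bound.plogon}). The one refinement you would need is to run the incompressibility argument \emph{per copy} of $r_n$ --- each copy, read from an arbitrary valid configuration, must already produce at least $\ell_n-\log^{O(1)}\ell_n$ output symbols --- rather than per block of copies, since the total output on a block can greatly exceed $\ell_n\log|\Sigma|$ bits without yielding any contradiction with the randomness of $r_n$.
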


    The proof will use  the following  general property that bounds   the output of Lempel-Ziv on strings of the form $w=u^n$.

\begin{lemma} \label{LZ is OK_finite}
Let $n \in \mathbb{N}$ and let $ u \in \Sigma ^*$, where $u
\neq \lambda$. Define $l=1+|u|$ and $w=u^n$.
Consider the execution of Lempel-Ziv on $w$ starting from a
dictionary containing $d\geq 0$ phrases.
%
Then we have that
    \begin{equation}\label{e.lz}
    |LZ(w)|\leq \sqrt{2l|w|} \log(d+\sqrt{2l|w|})
    \end{equation}
\end{lemma}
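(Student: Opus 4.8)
The plan is to reduce the bound on $|LZ(w)|$ to a bound on the \emph{number of phrases} that Lempel-Ziv produces on $w=u^n$, and then to control that number via the periodicity of $w$. Write $p=P(w)$ for the number of phrases in the parse. Since the output is a concatenation of $p$ codewords, each a prefix-free pointer into a dictionary of size at most $d+p$ together with one fresh symbol, we have $|LZ(w)|\le p\,\log(d+p)$, exactly as the inequality $|LZ(S_n|S_1\ldots S_{n-1})|\le p\log(p+d_n)$ used in the proof of Theorem~\ref{theo:first}. As $x\mapsto x\log(d+x)$ is nondecreasing, it then suffices to establish the phrase bound $p\le\sqrt{2l|w|}$ and substitute, using monotonicity, to obtain inequality (\ref{e.lz}).

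To bound $p$, I would exploit that $w=u^n$ has period $|u|$. For every length $k$, a factor $w[i..i+k-1]$ depends only on $i \bmod |u|$, so $w$ contains at most $|u|$ distinct substrings of each length $k$. Every phrase of the parse is a substring of $w$, and the phrases are pairwise distinct (with the usual caveat about the last, possibly incomplete, phrase). Hence, letting $N_k$ denote the number of phrases of length $k$, we have $N_k\le|u|$, while $\sum_k N_k=p$ and $\sum_k k\,N_k=|w|$.

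The heart of the matter is a packing estimate. Among all nonnegative integer sequences $(N_k)$ with $N_k\le|u|$ and $\sum_k N_k=p$, the total length $\sum_k k\,N_k$ is minimized by filling the shortest lengths first. Writing $p=a|u|+b$ with $0\le b<|u|$, this minimum equals $|u|\binom{a+1}{2}+(a+1)b$, and a direct computation gives $2|u|\bigl(|u|\binom{a+1}{2}+(a+1)b\bigr)-p^2=|u|^2a+2|u|b-b^2\ge 0$, using $b^2\le|u|b$. Since $|w|$ is at least this minimum, we get $p^2\le 2|u||w|\le 2l|w|$, that is $p\le\sqrt{2l|w|}$, which is what we need.

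I expect the main obstacle to be the bookkeeping around distinctness rather than the inequality itself. The one delicate point is the possibly-incomplete final phrase, which need not be distinct from the earlier ones; this is precisely where the slack in taking $l=1+|u|$ rather than $|u|$ is spent, either by applying the packing estimate to the $p-1$ genuinely distinct phrases and absorbing the extra unit, or by checking the few small cases directly. Once $p\le\sqrt{2l|w|}$ is secured, combining it with $|LZ(w)|\le p\log(d+p)$ and the monotonicity of $x\mapsto x\log(d+x)$ yields (\ref{e.lz}).
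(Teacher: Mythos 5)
Your proposal is correct and follows essentially the same route as the paper: bound the output by $p\log(d+p)$, use the periodicity of $w=u^n$ to limit the parse to at most $|u|$ distinct phrases of each length, and deduce $p\le\sqrt{2l|w|}$ by a shortest-phrases-first packing estimate. Your version of the packing step is slightly sharper (an exact minimum giving $p^2\le 2|u||w|$, with the slack $l=|u|+1$ reserved for the possibly-duplicated final phrase), whereas the paper groups the dictionary words into blocks of $l$ and bounds $t_k\ge kl$, but the argument is the same in substance.
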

\begin{proofof}{of Lemma \ref{LZ is OK_finite}}
 Let us fix $n$ and
consider the execution of Lempel-Ziv algorithm on $w$: as
it parses the word, it enlarges its dictionary of phrases.
Fix an integer $k$ and let us bound the number of new words
of size $k$ in the dictionary. As the algorithm parses
$|u|$, the number of different words of size $k$ in $u^n$
is at most $|u|$ (at most one beginning at each symbol of
$u$). Therefore we obtain a total of at most $|u|$
different new words of size $k$ in $w$. This total is
bounded from above by $l=|u|+1$.

Therefore at the end of the algorithm and for all $k$, the
dictionary
  contains at most $l$ new words of size $k$. We can now  bound from above the size of
  the compressed image of $w$. Let $p$ be the number of new phrases in the
  parsing made by Lempel-Ziv algorithm. The size of the compression is
  then $p\log(p+d)$: indeed, the encoding of each phrase consists in a new
  symbol
  and a pointer towards one of the $p+d$ words of the dictionary. The only
  remaining step is thus to evaluate the number $p$ of new words in the
  dictionary.

 Let us order the words of the dictionary by increasing length and call $t_1$
  the total length of the first $l$ words (that is, the $l$ smallest words),
  $t_2$ the total length of the $l$ following words (that is, words of index
  between $l+1$ and $2l$ in the order), and so on: $t_k$ is the sum of the size
  of the words with index between $(k-1)l+1$ and $kl$. Since the sum of the size
  of all these words is equal to $|w|$, we have $$|w|=\sum_{k\geq 1}
  t_k.$$
  Furthermore, since for each $k$ there are at most $l$ new
  words of size $k$, the words taken into account in $t_k$ all have size at
  least $k$: hence $t_k\geq kl$.
 Thus we obtain
  $$|w|=\sum_{k\geq 1} t_k\geq \sum_{k=1}^{p/l}kl\geq \frac{p^2}{2l}.$$
  Hence $p$ satisfies
  $$\frac{p^2}{2l}\leq |w|,\mbox{ that is, }p\leq\sqrt{2l|w|}.$$
  The size of the compression of $w$ is $p\log
  (p+d)\leq \sqrt{2l|w|}\log (d+\sqrt{2l|w|})$,
    which ends the proof of Lemma  \ref{LZ is OK_finite}.

    \end{proofof}

\begin{proofof}{of Theorem \ref{t.lz.1}}
Let $A, c \in \N$ with $c\ge 7$. For each $i\in\N$, let
$R_i$ be a Kolmogorov random string with $|R_i|=i$ (i.e.
$K(R_i)>i\log |\Sigma|-A$ for $A$ the constant just fixed).
Let
\[S_n= R_1R_2^{2^c}R_3^{3^c}\ldots R_n^{n^c}\] ($R_n^{n^c}$
means $n^c$ copies of $R_n$) and let $S$ be the infinite
sequence having all $S_n$ as prefixes.

    The following three lemmas will analyze the performance of Lempel Ziv on all prefixes of $S$.
\begin{lemma}\label{l.lz.1}
    $$
    \frac{|LZ(S_n)|}{|S_n|}\leq\frac{n^{\frac{c+6}{2}}}{n^{c+1}}
    $$
    for $n$ large enough.
\end{lemma}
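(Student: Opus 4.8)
The plan is to bound the numerator $|LZ(S_n)|$ and the denominator $|S_n|$ separately, using Lemma \ref{LZ is OK_finite} blockwise for the numerator and keeping only the last block for the denominator. Recall $S_n = R_1 R_2^{2^c} R_3^{3^c}\cdots R_n^{n^c}$, where the $k$-th block $R_k^{k^c}$ consists of $k^c$ copies of a string of length $|R_k|=k$, so the block has length $k^{c+1}$. Since Lempel-Ziv carries its dictionary across the block boundaries, I would first write
\[
|LZ(S_n)| = \sum_{k=1}^{n} \bigl|LZ\bigl(R_k^{k^c}\mid R_1 R_2^{2^c}\cdots R_{k-1}^{(k-1)^c}\bigr)\bigr|,
\]
up to $O(1)$ per boundary for a phrase that straddles two blocks (a negligible $O(n)$ overall correction).

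Next I would apply Lemma \ref{LZ is OK_finite} to each summand with $u=R_k$, hence $l=k+1$ and $w=R_k^{k^c}$ with $|w|=k^{c+1}$, and with $d=d_k$ the number of phrases already in the dictionary when block $k$ begins. This yields $\bigl|LZ(R_k^{k^c}\mid\cdots)\bigr|\leq \sqrt{2(k+1)k^{c+1}}\,\log\!\bigl(d_k+\sqrt{2(k+1)k^{c+1}}\bigr)$. The point of invoking the lemma here is precisely that the $k^c$ repetitions do not blow up the phrase count: the bound depends only on $l$ and $|w|$. I would then estimate the two factors crudely: since $k+1\leq 2k$ and $k\leq n$, the square root is at most $2k^{(c+2)/2}\leq 2n^{(c+2)/2}$; and since the dictionary size never exceeds the total length parsed, $d_k\leq|S_n|\leq n^{c+2}$, so the logarithm is $O(\log n)$. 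Summing the $n$ blocks gives
\[
|LZ(S_n)| \leq n\cdot 2n^{(c+2)/2}\cdot O(\log n)= O\!\bigl(n^{(c+4)/2}\log n\bigr)\leq n^{(c+6)/2}
\]
for $n$ large enough, because the spare factor $n$ in the exponent comfortably absorbs both the logarithm and the per-block overhead. For the denominator I would simply keep the last block: $|S_n|\geq |R_n^{n^c}| = n\cdot n^{c}=n^{c+1}$. Dividing gives $\dfrac{|LZ(S_n)|}{|S_n|}\leq \dfrac{n^{(c+6)/2}}{n^{c+1}}$, which is the claim; note that with $c\geq 7$ this ratio is $n^{(4-c)/2}\to 0$, consistent with the eventual goal $R_{LZ}(S)=0$.

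The only genuinely delicate point is the blockwise application of Lemma \ref{LZ is OK_finite} with a \emph{carried-over} dictionary: I must make sure the lemma's counting argument (at most $l$ new phrases of each length, because $u^{k^c}$ has at most $|u|=k$ distinct substrings of any fixed length) remains valid when the dictionary already contains material from $R_1\cdots R_{k-1}^{(k-1)^c}$. This is fine, since a pre-existing dictionary can only merge several would-be phrases into fewer, longer ones, never increase the count of new phrases of a given length; the lemma's hypothesis explicitly allows a starting dictionary of $d\geq 0$ phrases. The remaining concern, the straddling phrase at each block junction, contributes at most one phrase per boundary and hence at most $O(n)$ to $|LZ(S_n)|$, dominated by the $n^{(c+6)/2}$ term. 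Everything else is routine exponent bookkeeping, and the generous gap between the true exponent $(c+4)/2$ (even after the logarithm, below $(c+5)/2$) and the target $(c+6)/2$ leaves ample room.
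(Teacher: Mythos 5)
Your proposal is correct and follows essentially the same route as the paper: decompose $|LZ(S_n)|$ block by block, apply Lemma \ref{LZ is OK_finite} to each block $R_k^{k^c}$ with $l=k+1$ and a carried-over dictionary of size at most $|S_n|\leq n^{c+2}$, sum the $n$ resulting bounds, and divide by $|S_n|\geq n^{c+1}$. The only cosmetic difference is bookkeeping (the paper absorbs the logarithm into a per-term bound $i^{(c+3)/2}$ for $i$ large, treating small indices separately, while you keep the logarithm explicit and bound the sum globally), and your explicit remarks about the carried-over dictionary and straddling phrases are sound.
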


\begin{proofof}{of Lemma \ref{l.lz.1}}
    Denote by $LZ(i|i-1)$ the output of LZ on $R_i^{i^c}$,
    after having parsed $S_{i-1}$ already.

    Using the notation of Lemma \ref{LZ is OK_finite}, let $w=R_i^{i^c}$; thus $l=1+|R_i|=1+i$, and $d\leq |S_{i-1}|\leq (i-1)^{c+2}$. Thus
    $$
    |LZ(i|i-1)|\leq \sqrt{2(i+1)i^{c+1}} \log((i-1)^{c+2}+\sqrt{2(i+1)i^{c+1}})<i^{(c+3)/2}
    $$
    for $i$ large enough ($i\geq N_0$).
    Thus for $n$ sufficiently large
    \begin{align*}
    |LZ(S_n)|&= \sum_{j=1}^{n}|LZ(j|j-1)|\\
    &= \sum_{j=1}^{N_0-1}|LZ(j|j-1)|+\sum_{j=N_0}^{n}|LZ(j|j-1)|\\
    &\leq n + n\cdot n^{(c+3)/2} \leq n^{(c+6)/2}
    \end{align*}
    for $n$ large enough,
     which ends the proof of Lemma \ref{l.lz.1}.
    \end{proofof}

    \begin{lemma}\label{l.lz.2}
    Let $S_{n,t}= R_1R_2^{2^c}R_3^{3^c}\ldots R_n^{n^c}R^t_{n+1}$ where
    $1\leq t <(n+1)^c$.
    Then
    $$
    \frac{|LZ(S_{n,t})|}{|S_{n,t}|}\leq\frac{n^{(c+7)/2}}{n^{c+1}}
    $$
    for $n$ large enough.
\end{lemma}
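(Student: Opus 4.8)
The plan is to split the Lempel-Ziv output on $S_{n,t}$ at the boundary between the complete prefix $S_n$ and the appended partial block $R_{n+1}^t$, writing
$$|LZ(S_{n,t})| = |LZ(S_n)| + |LZ(R_{n+1}^t | S_n)|,$$
where the second summand is the output produced while reading the $t$ copies of $R_{n+1}$ starting from the dictionary already built up on $S_n$. The first summand is controlled directly by Lemma \ref{l.lz.1}, which gives $|LZ(S_n)| \leq n^{(c+6)/2}$ for $n$ large. Thus the whole statement reduces to showing that the appended tail contributes only a lower-order term, and then dividing by a lower bound on $|S_{n,t}|$.

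For the tail, I would apply Lemma \ref{LZ is OK_finite} with $u = R_{n+1}$ and exponent $t$, so that $w = R_{n+1}^t$, $l = 1 + |R_{n+1}| = n+2$, and $|w| = t(n+1)$. The dictionary already holds $d \leq |S_n|$ phrases, and since $|S_n| = \sum_{j=1}^n j^{c+1} \leq n^{c+2}$ we may take $d \leq n^{c+2}$. Using $t < (n+1)^c$ gives $\sqrt{2l|w|} = \sqrt{2(n+2)\,t(n+1)} < \sqrt{2(n+2)(n+1)^{c+1}} = O(n^{(c+2)/2})$, so the logarithmic factor $\log(d + \sqrt{2l|w|}) = O(\log n)$. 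Hence $|LZ(R_{n+1}^t | S_n)| = O(n^{(c+2)/2}\log n) \leq n^{(c+3)/2}$ for $n$ large enough, which is dominated by the bound on $|LZ(S_n)|$.

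Combining the two estimates, $|LZ(S_{n,t})| \leq n^{(c+6)/2} + n^{(c+3)/2} \leq 2\,n^{(c+6)/2} \leq n^{(c+7)/2}$ for $n$ large enough. For the denominator the trivial bound $|S_{n,t}| = |S_n| + t(n+1) \geq |S_n| \geq n^{c+1}$ suffices, and dividing yields exactly the claimed ratio $n^{(c+7)/2}/n^{c+1}$. The only step that requires genuine care is the application of Lemma \ref{LZ is OK_finite} to the tail: one must check that the accumulated dictionary from $S_n$, which enters only inside the logarithm, keeps that factor at $O(\log n)$ so that the partial repetition $R_{n+1}^t$ never outweighs $|LZ(S_n)|$. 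Everything else is routine exponent bookkeeping, and the slack between the exponents $(c+6)/2$ and $(c+7)/2$ is precisely what absorbs these lower-order contributions.
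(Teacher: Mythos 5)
Your proposal is correct and follows essentially the same route as the paper: split off the tail $R_{n+1}^t$, bound the prefix via Lemma \ref{l.lz.1}, bound the tail via Lemma \ref{LZ is OK_finite} with $l=n+2$, $|w|=t(n+1)$, $d\leq|S_n|\leq n^{c+2}$, and divide by $|S_{n,t}|\geq n^{c+1}$. The only difference is cosmetic (you absorb the logarithmic factor to get a tail exponent of $(c+3)/2$ where the paper settles for $(c+5)/2$), and both are comfortably dominated by $n^{(c+6)/2}$.
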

 \begin{proofof}{of Lemma  \ref{l.lz.2}}

     Using Lemma \ref{l.lz.1} we have
    \begin{align*}
    |LZ(S_{n,t})|&= |LZ(S_n)|+ |LZ(R^t_{n+1}|S_n)|\\
    &\leq n^{(c+6)/2} + |LZ(R^t_{n+1}|S_n)|
    \end{align*}
    Applying Lemma \ref{LZ is OK_finite} with $w=R^t_{n+1}$, $d\leq |S_n| \leq n^{c+2}$, $l=n+2$, $|w|=t(n+1)$
    yields (for $n$ large enough)
    \begin{align*}
    |LZ(R^t_{n+1}|S_n)|&\leq \sqrt{2t(n+1)(n+2)} \log(n^{c+2}+\sqrt{2t(n+1)(n+2)})\\
    &\leq n^{3/2}\sqrt{t}\leq n^{(c+5)/2} .
    \end{align*}
    Whence
    $$
    \frac{|LZ(S_{n,t})|}{|S_{n,t}|}\leq\frac{n^{(c+6)/2}+n^{(c+5)/2}}{n^{c+1}} \leq \frac{n^{(c+7)/2}}{n^{c+1}}
    $$
    which ends the proof of Lemma \ref{l.lz.2}.
    \end{proofof}
    \begin{lemma}\label{l.lz.2a} For almost every $k$,
    $\frac{|LZ(S[1\ldots k]|)}{k}\leq k^{(-1+9/(c+3))/2}$
    i.e., for any $c\geq 7$
    $R_{LZ}(S) = 0.$
    \end{lemma}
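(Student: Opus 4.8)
The plan is to reduce an arbitrary prefix $S[1\ldots k]$ to one of the strings $S_n$ or $S_{n,t}$ already controlled by Lemmas~\ref{l.lz.1} and~\ref{l.lz.2}, and then to convert the resulting bound, which is naturally expressed in the block index $n$, into a bound expressed in the prefix length $k$.

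First I would locate, for a given large $k$, the unique $n$ with $|S_n|\le k<|S_{n+1}|$. Since the block following $S_n$ is $R_{n+1}^{(n+1)^c}$, one can write
\[
S[1\ldots k]=S_{n,t}\,R_{n+1}[1\ldots r],\qquad 0\le t<(n+1)^c,\ \ 0\le r<n+1,
\]
so that $S[1\ldots k]$ is a prefix of $S_{n,t+1}$ (with the convention $S_{n,(n+1)^c}=S_{n+1}$). Because the Lempel--Ziv output length does not decrease when the input is extended to the next block boundary, we have $|LZ(S[1\ldots k])|\le |LZ(S_{n,t+1})|$: the only quantity discarded here is the contribution of the partial copy $R_{n+1}[1\ldots r]$, a string of length $<n+1$, which is negligible. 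This pushes the whole problem onto $S_{n,t+1}$, which is exactly of the form treated in Lemma~\ref{l.lz.2} when $t+1<(n+1)^c$, and equals $S_{n+1}$ (covered by Lemma~\ref{l.lz.1}) in the remaining boundary case.

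Next I would apply those lemmas. Both give $|LZ(S_{n,t+1})|\le \frac{n^{(c+7)/2}}{n^{c+1}}\,|S_{n,t+1}|=n^{(5-c)/2}\,|S_{n,t+1}|$, using $(c+7)/2-(c+1)=(5-c)/2$. Since $|S_{n,t+1}|=k+(n+1-r)\le 2k$ for $k$ large (as $n+1\le k$), this yields $\frac{|LZ(S[1\ldots k])|}{k}\le 2\,n^{(5-c)/2}$. To pass to a bound in $k$, observe $|S_n|\le k<|S_{n+1}|\le(n+1)^{c+2}$, so $n+1>k^{1/(c+2)}$ and hence $n\ge\tfrac12 k^{1/(c+2)}$ for $k$ large. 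As $(5-c)/2<0$ for $c\ge 7$, the exponent is negative, and this lower bound on $n$ gives $n^{(5-c)/2}=O\!\left(k^{(5-c)/(2(c+2))}\right)$, whence $\frac{|LZ(S[1\ldots k])|}{k}=O\!\left(k^{(5-c)/(2(c+2))}\right)$.

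Finally, since $\tfrac{5-c}{2(c+2)}<\tfrac{6-c}{2(c+3)}=\tfrac12\!\left(-1+\tfrac{9}{c+3}\right)$ (their difference being $\tfrac{2c-3}{2(c+2)(c+3)}>0$), the multiplicative constant can be absorbed into the slightly larger, still negative, exponent for all large $k$, giving $\frac{|LZ(S[1\ldots k])|}{k}\le k^{(-1+9/(c+3))/2}$ for almost every $k$, as claimed; since this bound tends to $0$, we conclude $R_{LZ}(S)=\limsup_k \frac{|LZ(S[1\ldots k])|}{k}=0$. The only delicate point in the whole argument is the reduction in the first step: one must verify that truncating the last, partial copy of $R_{n+1}$ and re-aligning to a block boundary changes the compressed length only negligibly, so that the clean per-block estimates of Lemmas~\ref{l.lz.1} and~\ref{l.lz.2} apply. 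Everything after that is the index translation $n\leftrightarrow k$ and routine exponent arithmetic.
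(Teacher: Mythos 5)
Your proof is correct and follows essentially the same route as the paper: decompose $S[1\ldots k]$ into $S_{n,t}$ plus a partial copy of $R_{n+1}$, invoke Lemmas~\ref{l.lz.1} and~\ref{l.lz.2}, and translate the block index $n$ into the prefix length $k$ via $k<|S_{n+1}|$. The only (harmless) deviation is in handling the trailing partial copy: the paper bounds its contribution directly by $O(n\log n)$ output symbols, whereas you round up to the next block boundary and use monotonicity of $|LZ(\cdot)|$ under prefix extension --- a claim that does hold for LZ78, since the greedy parsing of a prefix agrees with that of the extension except on the last (truncated) phrase, whose dictionary pointer has index no larger than that of the full phrase.
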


     \begin{proofof}{of Lemma \ref{l.lz.2a}}
     Let $k\in\N$ and let $n,t,l$ ($0\leq l\leq
n$, $0\leq t<(n+1)^c$) be such that
    $S[1\ldots k]=S_nR^t_{n+1}R_{n+1}[1\ldots l]$.
    On $R_{n+1}[1\ldots l]$, LZ outputs at most $l \log(S[1\ldots k])= O(n \log n)$ symbols.
    Since $k\le (n+1)^{c+2}<n^{c+3}$,  Lemma \ref{l.lz.2}
    yields
    $$
    \frac{|LZ(S[1\ldots k])|}{k}\leq\frac{n^{(c+7)/2}+O(n\log n)}{n^{c+1}}\leq n^{(-c+6)/2} \leq k^{(-1+9/(c+3))/2} \ .
    $$ 
    \end{proofof}

    Let us show that the sequence $S$ is not compressible by ILplogs. For this we show that each large substring $x$ of the input that is a Kolmogorov
    random word cannot be compressed by a plogon transducer, independently of the computation performed before processing $x$.

    Let $C$ be an ILplog.
    For strings $z, \alpha, \beta,x$ with $z=\alpha x\beta$ and $|z|=m$, denote by $C(s,x,m)$ the output of $C$
    starting in configuration $s$ and reading $x$ out of an input of length $m$. A valid configuration,
    is a configuration $s$ such that there exists a string $c$ such that $C(s_0,c,m)$ ends in configuration
    $s$, where $s_0$ is the start configuration of $C$. For example
    if $s$ is the configuration of $C$ after reading $a$, then $C(s,x,m)$ is the output of
    $C$ while reading part $x$ of input $z=axb$. Note that $|s|\leq \log(m)^{{O}(1)}$.

    \begin{lemma} \label{l.lower.bound.plogon}
    Let $C$ be an ILplog, running in space $\log^a m$, and let $0<T\leq 1$.
    Then for every $d\in\N$ and almost every $r\in\N$,
    for every random string $x\in\Sigma^{r}$
    (with $K(x)\geq T|x|\log |\Sigma|-A$ for some fixed constant $A$),
    for every $M$ with $|x|\leq M \leq |x|^d$
    and for every  valid configuration $s$ ($|s|\leq\log^a M$)
    $$
    |C(s,x,M)|\geq T|x| - \log^{2a}|x|.
    $$
    \end{lemma}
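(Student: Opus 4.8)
The plan is to prove the contrapositive by a Kolmogorov incompressibility argument: assuming that some admissible configuration compresses $x$ too well, I extract a short program for $x$ and contradict its randomness. Fix $d$, let $x\in\Sigma^{r}$ be random with $K(x)\geq Tr\log|\Sigma|-A$, fix $M$ with $r\leq M\leq r^{d}$ and a valid configuration $s$ with $|s|\leq\log^{a}M$, and suppose for contradiction that $p:=C(s,x,M)$ satisfies $|p|<Tr-\log^{2a}r$. Let $s'$ be the configuration reached when $C$ is run from $s$ while reading $x$ out of an input of total length $M$; note $|s'|\leq\log^{a}M$ as well.

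The key step is a reconstruction procedure showing that $x$ is the \emph{unique} string of length $r$ which, started from $s$ on an input of total length $M$, produces output $p$ and ends in configuration $s'$. Uniqueness is exactly where information-losslessness is used. By the definition of $C(s,x,M)$ there is a decomposition $z=\alpha x\beta$ with $|z|=M$ such that reading $\alpha$ from the start configuration $s_0$ reaches $s$; write $q$ for the output emitted along $\alpha$, and note $|\alpha|+r\leq M$. If two distinct strings $u_1,u_2\in\Sigma^{r}$ both drove $C$ from $s$ to $s'$ with output $p$, then, choosing any common completion $\gamma$ of length $M-|\alpha|-r=|\beta|\geq 0$, determinism from the common configuration $s'$ forces $C$ to emit the same suffix $\sigma$ in both runs, so $C(\alpha u_1\gamma)=qp\sigma=C(\alpha u_2\gamma)$ while $\alpha u_1\gamma\neq\alpha u_2\gamma$, contradicting that $C$ is one-to-one. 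Hence $x$ is recovered by simulating $C$ from $s$ on every $u\in\Sigma^{r}$ (feeding $M$ in binary on the special input-length tape) and selecting the one whose run matches the pair $(p,s')$.

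This yields a program for $x$ consisting of a description of $C$ (a constant), the integers $r$ and $M$, the configurations $s$ and $s'$, and the output $p$, with all the short parts written self-delimitingly and $p$ placed last so that no separate length marker for $p$ is needed. Since $M\leq r^{d}$ we have $\log M=O(\log r)$ and $|s|,|s'|\leq\log^{a}M=O(\log^{a}r)$, so the total length of the description is at most $|p|\log|\Sigma|+O(\log^{a}r)$ bits, hence at most $(Tr-\log^{2a}r)\log|\Sigma|+O(\log^{a}r)$.

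Finally I compare with the randomness hypothesis: the gain $\log^{2a}r\cdot\log|\Sigma|$ dominates the overhead $O(\log^{a}r)+A$, so for all sufficiently large $r$ this upper bound contradicts $K(x)\geq Tr\log|\Sigma|-A$. Because the threshold on $r$ depends only on the fixed quantities $C,T,A,d,a,|\Sigma|$ and not on the particular $x$, $M$, or $s$, the inequality $|C(s,x,M)|\geq T|x|-\log^{2a}|x|$ holds for almost every $r$ uniformly over all admissible choices. The main obstacle is the uniqueness step: one must ensure that a plogon configuration genuinely captures all internal information (state, work tape, head positions, and the relevant part of the input-length tape) so that simulation from $s$ is well defined, and that a completion $\gamma$ of the precise length needed to invoke injectivity of $C$ is always available.
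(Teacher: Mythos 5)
Your proposal is correct and follows essentially the same route as the paper: assume $|C(s,x,M)|<Tr-\log^{2a}r$, show $x$ is the unique length-$r$ string driving $C$ from $s$ to the recorded final configuration with output $p$ (using injectivity of $C$ via a common completion of the input to total length $M$), and derive a short description $(s,s',M,r,p)$ of $x$ whose length contradicts $K(x)\geq Tr\log|\Sigma|-A$ for $r$ large. The paper's proof is the same argument with the completion taken concretely as $1^{M-|by|}$; your version is, if anything, slightly more careful about bit-counting and about the uniformity of the threshold on $r$.
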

    \begin{proofof}{of Lemma \ref{l.lower.bound.plogon}}
    Suppose by contradiction that $C(s,x,M)=p$, with $|p|< Tr - \log^{2a}r$;
    denote by $s^x$ the configuration of $C$ after having read $x$
    starting in $s$.
    Then $p'=(s^x,s,M,r,p)$ ($p'$ is encoded by doubling all symbols in $s^x,s,M,r$,
    separated by the delimiter $01$ followed by $p$) yields a program for $x$  (coded with alphabet $\Sigma$):\\
    ``Find  $y$ with $|y|=r$ such that $C(s,y,M)=p$, and $C$ ends in configuration $s^x$ after
    reading $y$.''

    $y$ is unique because otherwise suppose there are two strings $y,y'$ ($|y|=|y'|$) such that
    $C(s,y,M)=C(s,y',M)$, and  $C$ ends in the same configuration on $y$ and $y'$.
    Let $b$ be a string that brings $C$ into configuration $s$. Then
    for $z=1^{M-|by|}$ we have $C(byz)=C(by'z)$ which contradicts $C$ being 1-1.
    Therefore $y$ is unique, i.e. $y=x$.
    Thus for $r$ sufficiently large
    $$|p'|\leq 2(|s^x|+|s|+|M|+|r|)+|p| \leq 2(\log^a r^d + \log^a r^d +\log r^d + \log r) + Tr - \log^{2a}r$$
    $$\leq Tr - \frac{\log^{2a}r}{2} $$
    which contradicts the randomness of $x$. 
    \end{proofof}

    \begin{lemma}\label{otrolemma}
    Let $C$ be an ILplog, running in space $\log^a m$.
    Then for every $\epsilon>0$ and for almost every $m$,
    $
    \frac{C(S[1\cdots m])}{m} > 1-\epsilon
    $
    i.e., $\rho_{\pl}(S)=1.$
    \end{lemma}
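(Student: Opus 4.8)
The plan is to show that for almost every prefix length $m$, the compression ratio of any fixed ILplog $C$ on $S[1\cdots m]$ exceeds $1-\epsilon$. The key observation is that $S$ is built by concatenating high powers $R_j^{j^c}$ of Kolmogorov-random strings, and the total length is dominated by the final few blocks. First I would write $S[1\cdots m] = S_{n-1} R_n^t R_n[1\cdots l]$ for appropriate $n$, $0\le t<n^c$, $0\le l\le n$, so that $m = |S_{n-1}| + tn + l$. The strategy is to apply Lemma \ref{l.lower.bound.plogon} to each individual copy of $R_n$ (and to the earlier blocks $R_j$), treating each as a random string $x$ read out of an input of total length $M=m$, starting from whatever valid configuration $C$ happens to be in at that point. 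Since $R_n$ is random with $K(R_n)\ge n\log|\Sigma| - A$, we may take $T=1$, and the lemma guarantees that each full copy of $R_n$ forces at least $n - \log^{2a} n$ output symbols, regardless of the configuration in which $C$ begins reading it.

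Next I would sum these lower bounds. The number of complete copies of $R_n$ already read is $t$, each contributing at least $n-\log^{2a}n$ output symbols; likewise the earlier blocks $R_j^{j^c}$ contribute at least $j^c(j - \log^{2a}j)$ each. I would then form the ratio
\begin{equation*}
\frac{|C(S[1\cdots m])|}{m} \ge \frac{\sum_{j=1}^{n-1} j^c(j-\log^{2a}j) + t(n-\log^{2a}n)}{\sum_{j=1}^{n-1}j^{c+1} + (t+1)n}.
\end{equation*}
Because $\sum_{j=1}^{n-1}j^{c+1}$ is dominated by its largest terms and the logarithmic corrections $\log^{2a}j$ are negligible compared to $j$, this ratio tends to $1$. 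The argument splits into two regimes analogous to the pushdown proof: when $t$ is large the final block $R_n^t$ already dominates the length and the per-copy bound $n-\log^{2a}n$ gives ratio near $1$; when $t$ is small the mass is carried by the blocks $S_{n-1}$, whose contributions also have ratio tending to $1$ by the same estimate. The leftover partial copy $R_n[1\cdots l]$ contributes length at most $n$, which is negligible against $m\ge |S_{n-1}|$, so it can be absorbed into the error term.

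The main obstacle, and the step requiring genuine care, is the application of Lemma \ref{l.lower.bound.plogon} to each copy \emph{independently} of the configuration. The subtlety is that the lemma's lower bound holds for \emph{every} valid configuration $s$, so even though $C$ carries over state information from the previous copies of $R_n$, it cannot exploit the repetition: the polylogarithmic space bound $\log^a M$ is simply too small to store a full copy of $R_n$ (which needs $n$ bits), so each fresh copy is essentially incompressible no matter what $C$ remembers. This is precisely where plogon compressors differ from Lempel-Ziv, which does detect the repetition. I would make sure the constant $M=m\le n^{c+3}$ satisfies the polynomial bound $|x|\le M\le|x|^d$ required by the lemma (here $|x|=n$ and a fixed $d$ depending on $c$ suffices), and that the accumulated logarithmic losses $\sum_j j^c \log^{2a} j$ remain of lower order than $\sum_j j^{c+1}$, which holds since $\log^{2a}j = o(j)$. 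Finally, quantifying ``almost every $m$'' requires noting that the estimate holds uniformly once $n$ is large, and every sufficiently large $m$ falls into some block index $n$ that grows with $m$, completing the proof that $\rho_{\pl}(S)=1$.
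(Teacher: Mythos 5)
Your overall strategy is the same as the paper's: decompose $S[1\cdots m]=S_{n-1}R_n^tR_n[1\cdots l]$, apply Lemma \ref{l.lower.bound.plogon} with $T=1$ to each occurrence of a random block (the point being that the bound holds for \emph{every} valid configuration, so repetition cannot be exploited in polylog space), sum the lower bounds, and check the leftover pieces are negligible. However, there is one concrete gap: you apply Lemma \ref{l.lower.bound.plogon} to \emph{all} earlier blocks $R_j$, summing $j^c(j-\log^{2a}j)$ from $j=1$ to $n-1$. The lemma requires $|x|\le M\le |x|^d$ for a \emph{fixed} $d$ (and its ``almost every $r$'' threshold depends on $d$), and here $M=m$ can be as large as about $n^{c+3}$ while $|x|=j$. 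For $j$ bounded, or more generally $j$ growing slower than any fixed power $n^{(c+3)/d}$, no fixed $d$ satisfies $m\le j^d$, so the lemma gives you nothing for those blocks. You do flag the condition $|x|\le M\le|x|^d$, but you only verify it for $|x|=n$, whereas your displayed sum silently assumes it for every $j\ge 1$.

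The fix is exactly what the paper does: pick $\epsilon'=\epsilon/(4\cdot 3^{c+2})$, apply the lemma only to blocks $R_j$ with $\epsilon' n\le j\le n$ (for which a single $d$ with $(\epsilon' n)^d\ge n^{c+2}$ works), lower-bound the output on the blocks with $j<\epsilon' n$ by $0$, and observe that those discarded blocks contribute at most roughly $(\epsilon' n)^{c+2}\le \epsilon' 3^{c+2}\cdot\frac{n}{3}(\frac{n}{3})^{c+1}$ symbols, i.e.\ an $O(\epsilon)$ fraction of $m$, which is absorbed into the error term. With that truncation your ratio computation goes through essentially as you wrote it; without it, the inequality
$|C(S[1\cdots m])|\ge\sum_{j=1}^{n-1}j^c(j-\log^{2a}j)+t(n-\log^{2a}n)$
is not justified by the lemma as stated.
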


\begin{proofof}{of Lemma \ref{otrolemma}}
    Let $\epsilon>0$ and let $\epsilon'=\frac{\epsilon}{4\cdot 3^{c+2}}$.
    Let $n,t,l$ ($0\leq l\leq n$, $0\leq t<n^c$) be such that
    $S[1\ldots m]=S_{n-1}R^t_{n}R_{n}[1\ldots l]$.

    The idea is to apply Lemma \ref{l.lower.bound.plogon} to
    $R_{\epsilon'n}^{(\epsilon'n)^c} \ldots R_{n-1}^{(n-1)^c}R^t_{n}R_{n}[1\ldots l]$.
    Let $d$ be such that $(\epsilon'n)^d\geq n^{c+2}$ (for all $n\geq 2$), i.e.
    $(\epsilon'n)^d\geq m$.
    By Lemma \ref{l.lower.bound.plogon}, $C$ on input $S[1\ldots m]$,
    will output at least $j-\log^{2a} j$ symbols on each $R_j$ ($\epsilon' n \leq j\le n$).
    Therefore
    $$
    |C(S[1\ldots m])|\geq \sum_{j=\epsilon'
    n}^{n-1}(j-\log^{2a}j)j^c+ t (n-\log^{2a}n)
    $$
    whence
    \begin{align*}
    \frac{|C(S[1\ldots m])|}{m}
    &\geq \frac{\sum_{j=\epsilon' n}^{n-1}j^c(j-\log^{2a}j)+ t (n-\log^{2a}n)}{\sum_{j=1}^{n-1}j^{c+1}+(t+1)n}
    \ \geq \frac{\sum_{j=\epsilon' n}^{n-1}j^c(j-\alpha j)+ t (n-\log^{2a}n)}{\sum_{j=1}^{n-1}j^{c+1}+(t+1)n}\\
    &\geq \frac{(1-\alpha)(\sum_{j=\epsilon'
    n}^{n-1}j^{c+1}+(t+1)n)}{(1+\alpha')(\sum_{j=1}^{n-1}j^{c+1}+(t+1)n)}-
    \frac{(1-\alpha)n}{\sum_{j=1}^{n-1}j^{c+1}+(t+1)n}
    \end{align*}
    where $\alpha,\alpha'>0$ can be chosen arbitrarily small (for $n$ large enough).
    Let $\alpha,\alpha'>0$ be such that $\frac{1-\alpha}{1+\alpha'}>1-\epsilon/2$.
    Thus
    \begin{align*}
    \frac{|C(S[1\ldots m])|}{m}
    &\geq \frac{1-\alpha}{1+\alpha'}
    -\frac{1-\alpha}{1+\alpha'}\cdot \frac{\sum_{j=1}^{\epsilon' n-1}j^{c+1}}
    {\sum_{j=1}^{n-1}j^{c+1}}- \epsilon/4
    \geq \frac{1-\alpha}{1+\alpha'}
    -\frac{\epsilon' n^{c+2}}
    {n/3 (n/3)^{c+1}}- \epsilon/4\\
    &= \frac{1-\alpha}{1-\alpha'}
    -\epsilon' 3^{c+2}- \epsilon/4
    > 1-\epsilon/2 - \epsilon/4 - \epsilon/4\\
    &> 1 -\epsilon \ .
    \end{align*}
    Since $\epsilon$ is arbitrary, $\rho_{\pl}(S)=1.$
\end{proofof}

This finishes the proof of Theorem \ref{t.lz.1}.
\end{proofof}

\subsection{Invertible pushdown beats plogon compressors}\label{secinvpd}

In this section we take the most restrictive classes of
pushdown compressors, namely invertible pushdown automata
and visibly pushdown automata, and show that they both
outperform plogon compressors.

The proof is based on using a list of Kolmogorov random
strings together with their reverses to construct the
sequence witnessing the separation. A careful choice of the
length of these random strings makes the result
incompressible by plogon devices.

\begin{theorem}\label{theo:ipd}
    For each $\epsilon>0$ there exists a sequence $S$ such that
    $$R_{\ipd}(S) \leq 1/2 \quad \text{and }\rho_{\pl}(S)\ge 1-\epsilon.$$
\end{theorem}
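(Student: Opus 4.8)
The plan is to build $S$ from Kolmogorov-random strings each immediately followed by its reverse, so that a stack can ``undo'' each reverse (giving pushdown ratio $1/2$) while a space-bounded online device, unable to store a whole random block, is forced to reproduce it almost verbatim (giving plogon ratio close to $1$). Concretely, I would work over an alphabet split into a \emph{call} part $\Sigma_c$ and a disjoint \emph{return} part $\Sigma_r$ joined by a fixed bijection (bar map) $\Sigma_c\to\Sigma_r$, and I would take $|\Sigma_c|=|\Sigma_r|$ large enough, depending on $\epsilon$, that $\log|\Sigma_c|/\log(|\Sigma_c|+|\Sigma_r|)\ge 1-\epsilon$. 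For each $i$ let $R_i\in\Sigma_c^{\ell_i}$ be Kolmogorov random over $\Sigma_c$ (so $K(R_i)\ge \ell_i\log|\Sigma_c|-O(1)$), let $\overline{R_i^{-1}}\in\Sigma_r^{\ell_i}$ be its reversed, barred copy, and set $S=R_1\overline{R_1^{-1}}R_2\overline{R_2^{-1}}\cdots$ with slowly growing lengths, e.g. $\ell_i=i$. The tension the lengths must respect is that each forward block $R_i$ must be negligible against the combined length of the earlier blocks, which forces polynomial rather than exponential growth.

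For $R_{\ipd}(S)\le 1/2$ I would exhibit a single compressor $C$ that is simultaneously visibly pushdown and invertible. On each call symbol it pushes the barred image of that symbol and copies the symbol to the output; on each return symbol it pops, checks the popped symbol against the input, and emits nothing, emitting one extra return symbol at each block boundary (at negligible cost) so that the compressed string stays uniquely parseable into blocks. The call/return typing makes the turn-around between $R_i$ and $\overline{R_i^{-1}}$ automatic, sidestepping the fact that a \emph{deterministic} pushdown machine cannot locate the centre of an unmarked even palindrome. The matching decompressor $D$ is a pushdown transducer that re-reads each $R_i$, pushes it, emits it, and then pops to regenerate $\overline{R_i^{-1}}$; thus $(C,D)$ is an $\ipd$ and $C$ is IL because $D$ inverts it. For the ratio I would check all prefixes: inside block $i$ the largest value of $|C(\cdot)|/|\cdot|$ occurs exactly at the start of the reverse part, where it equals $(A_i+\ell_i)/(2A_i+\ell_i)$ with $A_i=\sum_{j<i}\ell_j$; since $\ell_i=o(A_i)$ this tends to $1/2$, and it decreases back to $1/2$ while the reverse part is read, so $R_C(S)=1/2$ and hence $R_{\ipd}(S)\le 1/2$.

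For $\rho_{\pl}(S)\ge 1-\epsilon$ I would apply Lemma \ref{l.lower.bound.plogon} to every random piece, treating each $R_i$ and each $\overline{R_i^{-1}}$ as a string that, viewed over the full alphabet, has Kolmogorov rate $T=\log|\Sigma_c|/\log(|\Sigma_c|+|\Sigma_r|)\ge 1-\epsilon$; the reverse blocks qualify because reversing and barring change $K$ by only $O(1)$, and the lemma's bound is insensitive to whatever $\log^{O(1)}$-bounded configuration the machine occupies after reading $R_i$, so it cannot exploit the determined relationship between $R_i$ and $\overline{R_i^{-1}}$. The one technicality is the hypothesis $|x|\le M\le|x|^d$: for a prefix of length $M\approx n^2$ it fails for the very early, short blocks. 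I would handle this exactly as in Lemma \ref{otrolemma}, applying the lemma only to the recent blocks with index in $[\epsilon' n,n]$; these carry a $(1-(\epsilon')^2)$ fraction of the total length and satisfy $M\le \ell_i^d$ for a fixed $d$ and large $n$. Summing the per-block bounds shows every sufficiently long prefix has ratio at least $(1-\epsilon)(1-(\epsilon')^2)-o(1)$, so after absorbing $\epsilon'$ into $\epsilon$ the $\liminf$, and hence $\rho_{\pl}(S)$, is at least $1-\epsilon$.

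The conceptual crux, and the reason the bound is $1-\epsilon$ rather than $1$, is the disjointness of the call and return alphabets: the visibly pushdown discipline must distinguish a ``push'' symbol from a ``pop'' symbol, so the random blocks live in a sub-alphabet and their per-symbol Kolmogorov rate is capped by $\log|\Sigma_c|/\log|\Sigma|<1$; pushing this rate above $1-\epsilon$ is precisely what forces the alphabet, and hence $S$, to depend on $\epsilon$. I expect the genuinely delicate points to be verifying the pushdown worst-case ratio uniformly over mid-block prefixes and discharging the $M\le|x|^d$ constraint in the plogon bound; the remainder is a recombination of the devices already assembled for Theorem \ref{t.lz.1}.
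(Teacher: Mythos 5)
Your overall architecture (Kolmogorov-random blocks each followed by their reverse, a stack that verifies the reverse while outputting almost nothing, Lemma \ref{l.lower.bound.plogon} applied blockwise over the window $[\epsilon' n,n]$) is the same as the paper's, but the device you use to make the turn-around point recognizable does not prove this theorem. Splitting the alphabet into disjoint call and return halves caps the per-symbol Kolmogorov rate of your random blocks at $T=\log|\Sigma_c|/\log|\Sigma|=1-1/\log|\Sigma|$, and $\Sigma$ is fixed once in the preliminaries (it may be binary, in which case $T=0$). The theorem quantifies only over $\epsilon$, with $S\in\Sigma^\infty$ for that fixed $\Sigma$, so you cannot enlarge the alphabet as $\epsilon\to0$; you acknowledge this dependence yourself, but it is exactly what the statement forbids. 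What your construction actually establishes is Theorem \ref{theo:visi}, whose bound $1-1/\log|\Sigma|$ is precisely the ceiling of the call/return trick. The paper reaches $1-\epsilon$ over the fixed alphabet with a different marker: it inserts an explicit flag $1^k$ between $y_j$ and $y_j^{-1}$ and draws $y_j$ from the strings whose aligned $k$-blocks all differ from $1^k$; a counting argument still supplies such $y_j$ with $K(y_j)\ge\frac{k-1}{k}|y_j|\log|\Sigma|$, the flag is detectable by a deterministic pushdown reading in groups of $k$, and $\frac{k-1}{k}\ge1-\epsilon$ once $k>4/\epsilon$. The cost of locating the centre is thus $1/k$ rather than $1/\log|\Sigma|$, and $k$, not the alphabet, is the parameter that absorbs $\epsilon$.

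A secondary defect: your compressor emits nothing on return symbols, so for a prefix ending partway through a reverse block the pair $(C(w),\delta_Q(w))$ does not determine how many return symbols have been consumed --- finitely many states cannot encode that count, and the IL condition gives you the output and final state only, not the stack. The paper's compressor outputs one $0$ for every $A\approx1/\epsilon_1$ return symbols and uses the index of the current counting state to recover the residue modulo $A$; this costs only $\epsilon_1$ in the worst-case ratio and you need some such device for $C$ to be information-lossless. Your prefix-ratio analysis and the handling of the $|x|\le M\le|x|^d$ constraint are otherwise sound.
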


\begin{proof}
    Let $\epsilon_1, \epsilon_2>0$ and let $k\in\N$ to be determined later (as $k>4/\epsilon_2$).

    \comment{Let $t$ be such that $2^{k-1}\leq t <2^k$.
    Let $\Sigma$ be an alphabet of $t$ symbols and let $\alpha:\Sigma\rightarrow\bool{k}$
    be a 1-1 mapping such that $1^k$ is not in the image.
    Let $x\in\Sigma^*$ be a random string (i.e. $C^{\Sigma}(x)\geq |x|-u$ for some fixed constant $u$,
    where $C^{\Sigma}(x)$ is the plain Kolmogorov complexity of $x$ with alphabet $\Sigma$).
    Let $x'\in\bool{*}$ be the rewriting of $x$ in binary, i.e.,
    $$x'=\alpha(x[1])\alpha(x[2])\ldots\alpha(x[|x|]) .$$
    We have $l:=|x'|=k|x|$, and when $x'$ is parsed in phrases of size $k$, the string $1^k$ never appears.
    Let $T=\left(\frac{k-1}{k}\right)^2$. We have
    $$|C(x')|\geq T |x'| .$$
    Because otherwise suppose $x'$ has a program $p$ with $|p|<T|x'|$.
    Let $\beta:\bool{k-1}\rightarrow\Sigma$ be a 1-1 mapping. Consider $p'$
    the rewriting of $p$ in alphabet $\Sigma$, i.e., if $p=qr$ with
    $|q|\equiv 0 (\mod k-1)$, and $0\leq |r|<k-1$, then
    $$p'=\beta(q[1\ldots k-1])\beta(q[k\ldots 2k-2])\ldots \in \Sigma^*.$$
    Let $t=(l,r,p')$, encoded by doubling all bits of $l$ and $r$ and using the separator
    $01$. String $t$ is a program for $x$:
    \begin{enumerate}
    \item   Recover $p$ from $p'$ and $r$.
    \item Simulate $C(y,l)$ on all $y$'s of size $l$ and find (the unique) $y$ such that
    $C(y,l)=p$ (i.e. $y=x'$).
    \item From $x'$ recover $x$.
    \end{enumerate}
    We have
    \begin{align*}
    |t|&\leq 2\log|x'| + 2(k-1) + |p'|\\
    &\leq 2\log(k|x|) + 2(k-1) + T\frac{|x'|}{k-1}\\
    &\leq O(\log(|x|))  + |x|(1-1/k) \\
    &\leq |x|(1-1/2k)
    \end{align*}
    which contradicts the randomness of $x$.}

    We first notice that for each $m\in\N$ there is a string $y\in\Sigma^*$ with $|y|=k m$ and such that
    $y[ik+1 .. (i+1)k]\ne 1^k$ for every $i$ and $K(y)\ge \frac{k-1}{k}|y|\log |\Sigma|$. This can be proved by a simple counting argument.

    Let $t_n=k^{\lceil\frac{\log n}{\log k}\rceil}$, so that
    \begin{equation}\label{e.nk}
    n \leq t_n \leq nk.
    \end{equation}
    For each $n\in\N$ let $y_n\in\Sigma^{kt_n}$ be as above ($y_n[ik+1 .. (i+1)k]\ne 1^k$ for every $i$ and $K(y_n)\ge \frac{k-1}{k}|y_n|\log |\Sigma|$).

    Consider the sequence $S=y_11^ky_1^{-1}y_2 1^k y_2^{-1}\ldots y_n 1^k y_n^{-1}\ldots$. We will refer to the $1^k$ separators as flags.
    Consider the following invertible pushdown compressor $(C,D)$.
    Informally  on both $y_j$ and flag zones,  $C$ outputs the input.
    On a  $y_j^{-1}$ zone, $C$ outputs a zero for every $1/\epsilon_1$ symbols,
    and checks using the stack that the input is indeed $y_j^{-1}$. If the test fails,
    $C$ outputs an error flag, enters an error state, and from then on it outputs the input.

    The complete definitions of $C$ and $D$ are given for the sake of completeness. Let $A\geq 1/\epsilon_1$ with $A=k^a$ for some $a\in\N$, i.e.
    guaranteeing that $A| \, |y_n|$ for almost every $n$.
    The set of states $Q$ is:
    \begin{itemize}
        \item the start state $q^s_0$
        \item the counting states $q^s_1,\ldots,q^s_b$ and $q_0$, with $b=k\sum_{j=1}^{2^{\lceil a\log k\rceil}}(2t_j+1)$
        \item the flag  checking states $q_1^{f_1},\ldots,q_k^{f_1}$ and $q_1^{f_0},\ldots,q_k^{f_0}$
        \item the pop flag states $q^r_0,\ldots, q^r_k$
        \item the compress states $q^c_1,\ldots, q^c_{A+1}$
        \item the error state $q^e$.
    \end{itemize}
    We now describe the transition function $\delta:Q\times\Sigma^{*}\times\Sigma^{*}\rightarrow Q\times\Sigma^{*}$.
    At first $C$ counts from $q^s_0$ to $q^s_b$. This guarantees that  for later $y_j$, $A| \, |y_j|$.
    For $0\leq i \leq b-1$ let
    $$\delta(q^s_i,x,y) =(q^s_{i+1},y)$$ and
    $$\delta(q^s_b,\lambda,y) =(q_0,y) .$$

    After counting has taken place, a new $y$ zone starts;
    the input is pushed to the stack, and it is checked for the flag, by groups of $k$ symbols.
    $$
            \delta(q_0,x,y) =
        \begin{cases}
        (q^{f_1}_1,xy) &\text{ if } x=1\\
        (q^{f_0}_1,xy) &\text{ if } x\ne 1\\
        \end{cases}
   $$
    and for $1\leq i \leq k-1$
    $$\delta(q^{f_0}_i,x,y)=(q^{f_0}_{i+1},xy)$$
    $$
            \delta(q^{f_1}_i,x,y) =
        \begin{cases}
        (q^{f_1}_{i+1},xy) &\text{ if } x=1\\
        (q^{f_0}_{i+1},xy) &\text{ if } x\ne 1\\
        \end{cases}
   $$
    If the flag has not been detected after $k$ symbols, the  test starts again.
    $$\delta(q^{f_0}_k,\lambda,y)=(q_0,y).$$
    If the flag has  been detected the pop flag state is entered
    $$\delta(q^{f_1}_k,\lambda,y)=(q^r_0,y).$$
    Since the flag has been pushed to the stack it has to be removed, thus for
    $0\leq i \leq k-1$
    $$\delta(q^r_i,\lambda,y)=(q^r_{i+1},\lambda)$$
    $$\delta(q^r_k,\lambda,y)=(q^c_1,y).$$
    $C$ then checks using the stack that the input is indeed $y_j^{-1}$, counting modulo $A$. If the test fails,
    an error state is entered, thus for
    $1\leq i \leq A$
    $$
            \delta(q^{c}_i,x,y) =
        \begin{cases}
        (q^{c}_{i+1},\lambda) &\text{ if } x=y\\
        (q^{e},y) &\text{ if } x\neq y\text{ and }y\ne z_0\\
            (q^{f_1}_{1},x z_0) &\text{ if } x=1,\ y= z_0\\
            (q^{f_0}_{1},x z_0) &\text{ if } \ne 1,\ y= z_0\\
        \end{cases}
    $$
    Once $A$ symbols have been checked, the test starts again
    $$\delta(q^c_{A+1},\lambda,y)=(q^c_1,y).$$
    The error state is a loop,
    $\delta(q^e,x,y)=(q^e,y)$.

    We next describe the output function $\nu:Q\times\Sigma^{*}\times\Sigma^{*}\rightarrow \Sigma^{*}$.
    First on the counting states, the input is output, i.e., for $0\leq i\leq b-1$
    $$\nu(q^s_i,x,y)=x.$$

    On the flag states the input is output, thus for $1\leq i \leq k-1$, $a\in\bool{}$
    $$\nu(q^{f_a}_{i},x,y)=x.$$
    There is no output on popping states $q^r_0,\ldots, q^r_k$ and on compressing states $q^c_1,\ldots, q^c_{A+1}$
    except after $A$ symbols have been checked i.e.
    $$\nu(q^{c}_{A},x,y)=0\text{ if }x=y$$
    On error, $1^i0x$ is output, i.e. for $1\leq i \leq A$
    $$\nu(q^{c}_{i},x,y)=1^{i}0x\text{ if } x\neq y\text{ and }y\ne z_0.$$
    On the error state, the input is output, that is,
    $\nu(q^{e},x,y)=x$.

    Let us verify $C$ is IL, that is, the input can be recovered from the output and the final state. If the final state is not an error  state, then both all $y_j$'s and
    all flags are output as in the input. If the final state is  $q^c_i$ then the number $t$ of zeroes after the last flag (in the output),
    together with the final state $q^c_i$ determines that the last $y_j^{-1}$ zone is $tA+i-1$ symbols long.

    If the final state is  an error  state, then the output is of the form (suppose the error
    happened in the $y_j^{-1}$ zone)
    $$ay_j1^k0^t1^{i}0b$$
    with $a,b\in\Sigma^{*}$. The input is uniquely determined to be the input corresponding to output
    $ay_j1^k0^t$ with final state $q_1^c$ followed by
$$y_j^{-1}[tA+1 .. tA+i-1] b.$$

    We give the definition of the inverter $D$.
    The set of states $Q'$ is:
    \begin{itemize}
        \item the start state $q^s_0$
        \item the counting states $q^s_1,\ldots,q^s_b,q_0$, with $b=k\sum_{j=1}^{2^{\lceil a\log k\rceil}}(2t_j+1)$
        \item the flag  checking states $q_1^{f_1},\ldots,q_k^{f_1}$ and $q_1^{f_0},\ldots,q_k^{f_0}$
        \item the pop flag states $q^r_0,\ldots, q^r_k$
        \item the decompress states $q^d_u$ for $u\in\Sigma^{\le A}$
        \item the copy states $q^w_u$ for $u\in\Sigma^{\le A}$
        \item the output state $q^o$
    \end{itemize}
    $D$ receives as input a string followed by a state $q_f\in Q$.
    Let us describe the transition function $\delta':Q'\times\Sigma^{*}\times\Sigma^{*}\rightarrow Q'\times\Sigma^{*}$ and
    the output function $\nu':Q'\times\Sigma^{*}\times\Sigma^{*}\rightarrow \Sigma^{*}$ in parallel.
    At first $D$ counts from $q^s_0$ to $q^s_b$, i.e.,
    for $0\leq i \leq b-1$ let
    $$\delta'(q^s_i,x,y) =(q^s_{i+1},y)$$ and
    $$\delta'(q^s_b,\lambda,y) =(q_0,y) .$$
    On the counting states, the input is output, i.e., for $0\leq i\leq b-1$
    $$\nu'(q^s_i,x,y)=x.$$

    At first the input is pushed to the stack, and it is checked for the flag, by groups of $k$ symbols.
    $$
            \delta'(q_0,x,y) =
        \begin{cases}
        (q^{f_1}_1,xy) &\text{ if } x=1\\
        (q^{f_0}_1,xy) &\text{ if } x\ne 1\\
        \end{cases}
   $$
    and for $1\leq i \leq k-1$
    $$\delta'(q^{f_0}_i,x,y)=(q^{f_0}_{i+1},xy)$$
    $$
            \delta'(q^{f_1}_i,x,y) =
        \begin{cases}
        (q^{f_1}_{i+1},xy) &\text{ if } x=1\\
        (q^{f_0}_{i+1},xy) &\text{ if } \ne 1\\
        \end{cases}
   $$
    If the flag has not been detected after $k$ symbols, the  test starts again.
    $$\delta'(q^{f_0}_k,\lambda,y)=(q_0,y).$$
    If the flag has  been detected the pop flag state is entered
    $$\delta'(q^{f_1}_k,\lambda,y)=(q^r_0,y).$$
    Since the flag has been pushed to the stack it has to be removed, thus for
    $0\leq i \leq k-1$
    $$\delta'(q^r_i,\lambda,y)=(q^r_{i+1},\lambda)$$
    $$
    \delta'(q^r_k,\lambda,y)=(q^d_{\lambda},y)
    $$
    On the flag states the input is output, i.e.
    for $1\leq i \leq k-1$, $a\in\bool{}$
    $$\nu'(q^{f_a}_{i},x,y)=x,$$
    $$\nu'(q_0,x,y)=x.$$
    There is no output on popping states $q^r_0,\ldots, q^r_k$.

The decompressing states pop and memorize $A$ symbols of
the stack
$$
    \delta'(q^d_u,\lambda,y)=(q^d_{uy},\lambda)\text{ for }|u|<A.
    $$

    If $|u|=A$ then, depending on the next symbol, $u^{-1}$ should be output
    $$
    \delta'(q^d_u,0,y)=(q^d_{\lambda},y)\text{ if }y\ne z_0.
    $$
        $$
    \delta'(q^d_u,0,z_0)=(q_0,z_0).
    $$
    $$
    \nu'(q^d_u,0,y)=u^{-1}.
    $$
    If 1 is found then there is an error
    $$
    \delta'(q^d_u,1,y)=(q^w_u,y).
    $$
    $$
    \delta'(q^{w}_{bu},1,y)=(q^w_u,y).
    $$
      $$
    \nu'(q^w_{bu},1,y)=b.
    $$
     $$
    \delta'(q^{w}_{bu},0,y)=(q^o,y).
    $$

 If the next symbol is a state then the $y^{-1}$ zone was not complete
 $$
    \nu'(q^d_{u},q_i^c,y)=u^{-1}[1..i-1].
    $$

    Once the error has been passed, $D$ stays in the output state.
    $\delta'(q^o,x,y)=(q^o,y)$, $\nu'(q^o,x,y)=x$.

    This ends the description of $(C,D)$.

    Let us compute the compression ratio of $C$. For $n$ large enough and since the counting part on the first $b$ symbols of $S$
    is of constant size, it is negligible for computing the compression ratio, therefore we can assume wlog that $C$ starts compressing
    immediately, i.e. $b=0$; moreover  the ratio is largest just after a flag $1^k$
    whence
    \begin{align*}
    \frac{|C(y_1 1^k y_1^{-1}y_2 1^k y_2^{-1} \ldots y_n1^k)|}
    {|y_1 1^k y_1^{-1}y_2 1^k y_2^{-1} \ldots y_n1^k|}
    &\leq \frac{k(1+\epsilon_1)\sum_{j=1}^{n}t_j+nk-\epsilon_1 kt_n}{2k\sum_{j=1}^{n}t_j+nk-kt_n}\\
    &\leq \frac{1+\epsilon_1}{2} + \frac{n/2}{\sum_{j=1}^{n-1}t_j} + \frac{t_n/2}{\sum_{j=1}^{n-1}t_j}\\
    &\leq 1/2 + \epsilon_1/2+ \frac{n}{n(n-1)} + \frac{nk}{n(n-1)}\\
    &<1/2 + \epsilon_1/2 +\epsilon_1/4 +\epsilon_1/4 = 1/2 +\epsilon_1
    \end{align*}
    for $n$ sufficiently large. Since $\epsilon_1$ is arbitrary $$R_{\ipd}(S)\leq 1/2.$$

We now compute the compression ratio of a plogon compressor
on $S$.
    Let $m\in\N$ and let $n\in\N$ be such that
    $$S[1\ldots m] = y_1 1^k y_1^{-1}y_2 1^k y_2^{-1} \ldots (y_n1^ky_n^{-1})[1\ldots i]$$
    with $1\leq i \leq k(1+2t_n)$. Let $C$ be an ILplog, running in space $\log^a m$.
    Let $\epsilon'=\epsilon_2/8k$. Applying Lemma \ref{l.lower.bound.plogon} with $d=3$ and $r$ ranging  $\epsilon'n\leq r\leq n$
    (such that $r\le m\le r^3$ for $n$ sufficiently large),
    we have that for every $j\in\{ \epsilon' n,\ldots ,n\}$
    $$
    |C(s,y_j^{\delta},m)|\geq T|y_j| - \log^{2a}(|y_j|)
    $$
    where $\delta=\pm 1$.
    Letting $s_j$ (resp. $s_j'$) ($j\in\{ \epsilon' n,\ldots ,n\}$)
    denote the configuration of $C$ reached on input $S[1\ldots m]$
    just before reading the first symbol of $y_j$ (resp. $y_j^{-1}$), we have
    \begin{align*}
    |C(S[1\ldots m])|&\geq \sum_{j=\epsilon' n}^{n-1}|C(s_j,y_j,m)|+\sum_{j=\epsilon' n}^{n-1}|C(s'_j,y^{-1}_j,m)|\\
    &\geq 2  \sum_{j=\epsilon' n}^{n-1}(T|y_j|-\log^{2a}|y_j|)\\
    &> 2  \sum_{j=\epsilon' n}^{n-1}(T|y_j|-\gamma |y_j|)\\
    &= 2(T-\gamma)  \sum_{j=\epsilon' n}^{n-1}|y_j|
    \end{align*}
    with $\gamma>0$ arbitrary close to $0$, for $n$ large enough.
    Choosing $\gamma$ and $T=\frac{k-1}{k}$ such that $T-\gamma>1-\epsilon_2/4$ (taking $k>4/\epsilon_2$)
    yields
    \begin{align*}
    \frac{|C(S[1\ldots m])|}{|S[1\ldots m]|}
    &\geq \frac{2(T-\gamma)  \sum_{j=\epsilon' n}^{n-1}kt_j}{nk+2\sum_{j=1}^{n}kt_j}\\
    &\geq (T-\gamma) -(T-\gamma)[\frac{n/2}{\sum_{j=1}^{n}t_j} + \frac{t_n}{\sum_{j=1}^{n}t_j}+\frac{\sum_{j=1}^{\epsilon'n-1}t_j}{\sum_{j=1}^{n}t_j}]\\
    &\geq (T-\gamma) -(T-\gamma)[\frac{n}{n(n-1)} + \frac{2kn}{n(n-1)}+ \frac{k\epsilon'n(\epsilon'n-1)}{n(n-1)}]\\
    &\geq 1-\epsilon_2/4 - \epsilon_2/4 - \epsilon_2/4 -\epsilon_2/4 \\
    &> 1 -\epsilon_2
    \end{align*}
    for $n$ sufficiently large, and
    $$\rho_{\pl}(S)\geq 1-\epsilon_2.$$
\end{proof}

Even visibly pushdown automata, extensively used in the
compression of XML, can beat plogon compressors. The
definition of visibly pushdown automata can be found in
section \ref{s.pushdown.compressors}.

\begin{theorem}\label{theo:visi}
    There exists a sequence $S$ such that
    $$R_{\mathrm{visiblyPD}}(S) \leq 1/2 \quad \text{and } \rho_{\pl}(S)\ge 1-\frac{1}{\log |\Sigma|}.$$
\end{theorem}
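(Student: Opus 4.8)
The plan is to reuse the architecture of the proof of Theorem \ref{theo:ipd}, replacing the single-alphabet ``push $y_n$, flag, check $y_n^{-1}$'' gadget by one that respects the rigid call/return discipline of a visibly pushdown automaton. First I would split the alphabet: fix disjoint subsets $\Sigma_c,\Sigma_r\subseteq\Sigma$ of call and return symbols with $|\Sigma_c|=|\Sigma_r|=\lfloor|\Sigma|/2\rfloor$, together with a bijection $\phi:\Sigma_c\to\Sigma_r$. For each $n$ pick a Kolmogorov random word $y_n\in\Sigma_c^{*}$, random relative to the call alphabet (so $K(y_n)\ge|y_n|\log|\Sigma_c|-A$), with $|y_n|=\Theta(n)$ divisible by a fixed constant $A\ge 1/\epsilon_1$, and set $\bar y_n=(\phi(y_n))^{-1}$, the reverse of the $\phi$-recoding of $y_n$, which lives over $\Sigma_r$. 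The witness is $S=y_1\bar y_1\,y_2\bar y_2\cdots$. No $1^k$ flags are needed here: the change of symbol type (call$\to$return and return$\to$call) already delimits the blocks, and since $|\bar y_n|=|y_n|$ the stack returns to its base after each block, so the visibly pushdown discipline and the intended parsing agree automatically.

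The compressor $C$ is then forced rather than chosen: on a call symbol it must push, so it pushes a stack symbol $g_a\in\Gamma$ encoding the read $a\in\Sigma_c$ and outputs $a$ verbatim; on a return symbol it must pop, so it pops the top $g$, checks that the return symbol equals $\phi(\mathrm{decode}(g))$, and outputs nothing except one bookkeeping symbol every $A$ returns (the residue modulo $A$ being kept in the finite control). As in Theorem \ref{theo:ipd} I would add an error state entered on the first mismatch, after which $C$ copies its input verbatim; this makes $C$ a total information-lossless map on all of $\Sigma^{*}$, since the output together with the final state determines, block by block, the fully-output $y_n$, hence $\bar y_n=(\phi(y_n))^{-1}$, while the count of bookkeeping symbols and the final residue fix how far a truncated return block was read. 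The obvious reverse gadget inverts $C$, so this is even an invertible visibly pushdown compressor.

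For the two rates I would repeat the computations of Theorem \ref{theo:ipd}. A forward block $y_n$ is output in full ($|y_n|$ symbols) while the matching return block costs only $\approx\epsilon_1|y_n|$; since $|\bar y_n|=|y_n|$, each complete block has ratio $(1+\epsilon_1)/2$, and because $|y_n|=\Theta(n)$ the partial sums $\sum_{j\le n}|y_j|$ and $\sum_{j<n}|y_j|$ are asymptotically equal, so the worst prefixes (those ending just after a forward block) have ratio tending to $1/2$; as $\epsilon_1$ is arbitrary, $R_{\mathrm{visiblyPD}}(S)\le 1/2$. For the lower bound I apply Lemma \ref{l.lower.bound.plogon} to every $y_j$ and every $\bar y_j$ with $j\in\{\epsilon'n,\dots,n\}$, exactly as before. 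The key point is that $\bar y_j$ is a bijective recoding of $y_j^{-1}$, so $K(\bar y_j)\ge|\bar y_j|\log|\Sigma_c|-O(1)$ as well; taking $T=\log|\Sigma_c|/\log|\Sigma|=1-\tfrac1{\log|\Sigma|}$ (for even $|\Sigma|$), Lemma \ref{l.lower.bound.plogon} gives $|C(s,y_j^{\pm1},m)|\ge T|y_j|-\log^{2a}|y_j|$ on each half. Summing over both halves, the $\Theta(n)$ length schedule makes the denominator $\sum_j(|y_j|+|\bar y_j|)$ asymptotically twice the incompressible content, so the ratio tends to $T-\gamma$ for arbitrarily small $\gamma$, yielding $\rho_{\pl}(S)\ge 1-\tfrac1{\log|\Sigma|}$.

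The main obstacle is precisely the interplay that produces the $1/\log|\Sigma|$ loss. Because a visibly pushdown automaton must push on calls and pop on returns, the call and return alphabets must be disjoint, so at most half of $\Sigma$ carries random content on each side. It is tempting to make the return blocks trivial (a single repeated symbol recovered from the stack), but such blocks are plogon-compressible and would cap the lower bound at $T/2\le 1/2$; to beat $1/2$ one is \emph{forced} to let $\bar y_n$ itself carry genuine randomness over $\Sigma_r$, which is exactly what both balances the blocks for the $R\le 1/2$ estimate and costs the single bit per symbol that gives $T=1-1/\log|\Sigma|$. Checking that the rigid call/return discipline still admits a total information-lossless (indeed invertible) $C$, and absorbing the odd-$|\Sigma|$ rounding in $|\Sigma_c|$, are the remaining technical points.
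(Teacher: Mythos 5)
Your proposal is correct and follows essentially the same route as the paper, which (very tersely) constructs $S=y_1Y_1^{-1}y_2Y_2^{-1}\cdots$ with $y_i$ Kolmogorov random over the first half of the alphabet and $Y_i$ its recoding into the second half, exactly your $y_n\bar y_n$ blocks with $\bar y_n=(\phi(y_n))^{-1}$ and no flags. Your elaboration of the forced push/pop behaviour, the invertibility, and the computation of $T=1-\frac{1}{\log|\Sigma|}$ fills in details the paper leaves implicit, but the construction and both rate estimates are the same.
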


\begin{proof}

The proof is a variation of the proof of Theorem
\ref{theo:ipd}. If the alphabet $\Sigma$ has $2t$ symbols,
this time the sequence used is $S=y_1Y_1^{-1}y_2
Y_2^{-1}\ldots y_n  Y_n^{-1}\ldots$, where $y_i$ are
Kolmogorov random strings over the first $t$ symbols of the
alphabet, and $Y_i$ is the string obtained from $y_i$ by
changing each symbol $a$ by symbol $a+t$, that is, $Y_i$
contains only the last $t$ symbols of the alphabet.
\end{proof}

\subsection{Lempel-Ziv is not universal for
Pushdown compressors}\label{sec_LZnouniversalPD} It is well
known that LZ \cite{LemZiv78} yields a lower bound on the
finite-state compression of a sequence \cite{LemZiv78},
i.e., LZ is universal for finite-state compressors.

The following result shows that this is not true for
pushdown compression, in a strong sense: we construct a
sequence $S$ that is infinitely often incompressible by LZ,
but that has almost everywhere pushdown compression ratio
less than $\frac{1}{2}$.

\begin{theorem} \label{th LZ_not_univ_PD}
For every $\epsilon>0$, there is a sequence $S$ such that
$$R_{\ipd}(S)\leq \frac{1}{2}$$
and $$\rho_{LZ}(S)
> 1 - \epsilon.$$
\end{theorem}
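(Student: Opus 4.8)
The plan is to reuse the witness sequence and the invertible pushdown compressor from the proof of Theorem~\ref{theo:ipd}, and to replace the plogon lower bound there by a Lempel-Ziv lower bound. Concretely, fix $k>4/\epsilon$ and take $S=y_1 1^k y_1^{-1} y_2 1^k y_2^{-1}\ldots$ where each $y_n\in\Sigma^{kt_n}$ is Kolmogorov-random (with $K(y_n)\ge\frac{k-1}{k}|y_n|\log|\Sigma|$) and $t_n=k^{\lceil\log n/\log k\rceil}$, exactly as before. The invertible compressor $(C,D)$ that pushes each $y_j$-block and the flag onto the stack and then emits one symbol per $A\ge 1/\epsilon_1$ input symbols while verifying $y_j^{-1}$ against the stack is unchanged, so the computation giving $R_{\ipd}(S)\le 1/2$ carries over verbatim. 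The entire new content is the bound $\rho_{LZ}(S)>1-\epsilon$.

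For the Lempel-Ziv bound I would show that for almost every prefix $S[1\ldots m]$ the parsing produces nearly the maximal number of phrases, namely $P(S[1\ldots m])\ge(1-\delta)\,m/\log_{|\Sigma|}m$. Since $|LZ(S[1\ldots m])|\ge\sum_{i=1}^{P}\log_{|\Sigma|}i=(1-o(1))\,P\log_{|\Sigma|}P$ and $\log_{|\Sigma|}P=(1-o(1))\log_{|\Sigma|}m$ for such $P$, this gives $|LZ(S[1\ldots m])|\ge(1-o(1))(1-\delta)m$, hence $\rho_{LZ}(S)\ge(1-\delta)(1-o(1))>1-\epsilon$ once $\delta<\epsilon$. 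The contribution of the constant-length flags $1^k$ and of the last incomplete block is negligible and is absorbed, as in Lemma~\ref{l.lz.2a}, into the $o(1)$ terms.

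The heart of the argument is the phrase-count lower bound, and the key point is that Lempel-Ziv, being a forward parser, cannot exploit the reverse-redundancy that the pushdown compressor uses. For a forward block $y_j$ the bound is routine incompressibility: reading off the dictionary indices and appended symbols of the phrases lying inside $y_j$ is a program for $y_j$ given the earlier prefix of $S$, of length about $c_j\log_{|\Sigma|}m$ symbols, so randomness of $y_j$ forces $c_j\ge(1-o(1))\frac{k-1}{k}|y_j|/\log_{|\Sigma|}m$. For a reverse block $y_j^{-1}$ one argues that a \emph{long} phrase must match a dictionary entry created strictly before the forward copy of $y_j$: a match against a phrase created while reading forward $y_j$ would be a substring of $y_j$ equal to a reversed substring of $y_j$, and for a Kolmogorov-random $y_j$ such internal reverse-repeats have length only $O(\log_{|\Sigma|}|y_j|)$. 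Consequently the long phrases of $y_j^{-1}$ reference only the genuinely earlier prefix (or earlier phrases of $y_j^{-1}$ itself), so reading them off, reversing and decoding yields a program for $y_j$ relative to the prefix of $S$ preceding $y_j$, again of length about $c_j'\log_{|\Sigma|}m$; incompressibility of $y_j$ then gives $c_j'\ge(1-o(1))\frac{k-1}{k}|y_j|/\log_{|\Sigma|}m$ as well. Summing over all forward and reverse blocks yields the claimed global count.

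The main obstacle is precisely this reverse-block analysis. The naive route of bounding $|LZ(S[1\ldots m])|$ below by $K(S[1\ldots m])$ fails, because $S[1\ldots m]$ is genuinely compressible to ratio $1/2$ (the reverses are determined by the forward blocks), so any purely information-theoretic bound stalls at $1/2$; one must instead show that LZ \emph{specifically} fails to find this redundancy. Getting the constant right is the delicate part: a single-occurrence incompressibility argument only rules out repeated substrings longer than $2\log_{|\Sigma|}m$, which again yields merely ratio $1/2$. The correct threshold $\log_{|\Sigma|}m$ comes from charging each phrase to a dictionary index rather than to an arbitrary position, and from carefully separating the (short) phrases that reverse-match inside $y_j$ from the long ones that reference the earlier prefix, so that the tiling of a block by its phrases is encoded without paying a per-phrase position cost.
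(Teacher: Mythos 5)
Your high-level plan---keep the witness sequence and the invertible compressor of Theorem~\ref{theo:ipd} and replace the plogon lower bound by an LZ lower bound---is not the route the paper takes, and the one step that carries all the weight, namely the phrase-count lower bound on the reversed blocks $y_j^{-1}$, is exactly the step you do not actually prove. Two concrete problems. First, the strings $y_n$ of Theorem~\ref{theo:ipd} are \emph{not} fully Kolmogorov-random: they are constrained so that no $k$-block equals $1^k$ and satisfy only $K(y_n)\ge\frac{k-1}{k}|y_n|\log|\Sigma|$, a \emph{linear} randomness deficiency. Your key claim that ``internal reverse-repeats have length only $O(\log_{|\Sigma|}|y_j|)$'' is therefore false as stated: a string with deficiency $\frac{1}{k}|y_j|\log|\Sigma|$ may contain a substring $u$ of length $\Theta(|y_j|/k)$ together with $u^{-1}$, so a single phrase of $y_j^{-1}$ could reference a phrase created inside $y_j$ and swallow a constant fraction of the block. (This can be absorbed into $\epsilon$ by bounding the \emph{total} length of such long reverse-repeats by $O(|y_j|/k)$, but that argument is missing.) Second, even for fully random $y_j$, reconstructing $y_j^{-1}$ from the prefix of $S$ preceding $y_j$ forces you to spell out explicitly every phrase whose parent phrase was created while parsing $y_j$; such phrases can have length up to $2\log_{|\Sigma|}|y_j|$ and there may be $\Theta(|y_j|/\log_{|\Sigma|}m)$ of them, so the count loses a constant factor and stalls near ratio $1/2$ on the reverse blocks---precisely the difficulty you label ``delicate'' and then leave unresolved. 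It is resolvable, but only by exploiting $|y_j|=m^{1/2+o(1)}$ (so that $2\log_{|\Sigma|}|y_j|\approx\log_{|\Sigma|}m$) and by choosing the $y_j$ mutually random relative to the earlier prefix, a property the construction of Theorem~\ref{theo:ipd} does not supply; the conditional complexity $K(y_j\mid y_1,\dots,y_{j-1})$ is what your encoding argument actually needs, and individual randomness of each $y_j$ does not give it.

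The paper sidesteps all of this by changing the witness. For each $n$ it takes $T_n$, the set of \emph{all} strings of length $n$ with no run $1^k$, and lays out $S_n$ as the palindromes of $T_n$ followed by blocks $X_i$ and their reversals $Y_i$, separated by flags of increasing length. Since every string of $T_n$ extends a string of $T_{n-1}$ (Remark~\ref{r.extension}) and all shorter strings were parsed earlier, the LZ parsing of $S_n$ is \emph{forced} to consist of exactly one phrase per element of $T_n$, each of length $n$ and each costing about $a(n-1)\log|\Sigma|$ output bits with $a=1-1/k$; the bound $\rho_{LZ}(S)\ge 1-\epsilon$ then follows by pure counting, with no Kolmogorov-complexity argument at all, while the pushdown compressor still achieves ratio $1/2$ because each $Y_i$ is $X_i$ reversed. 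As submitted, your proof has a genuine gap at its central lemma; to salvage it you would need (i) mutually random $y_j$, (ii) a quantitative bound on the total length of reverse-repeats compatible with the $1/k$ deficiency, and (iii) the two-case (reference versus explicit) encoding of the phrases of $y_j^{-1}$ carried out with the $\log_{|\Sigma|}m$ threshold justified via $|y_j|\approx\sqrt{m}$.
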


\begin{proof}
 Let $\epsilon>0$, and let
    $k=k(\epsilon), v=v(\epsilon), v'=v'(\epsilon)$ be  integers to be determined later.
        For any integer $n$, let $T_n$ denote the set of  strings $x$ of size $n$ such that
        $1^j$ does not appear in $x$, for every $j\geq k$.
        Since $T_n$ contains $\Sigma^{k-1}\times \{ 0 \} \times \Sigma^{k-1}\times \{ 0 \} \ldots $
        (i.e. the set of strings whose every $k$th symbol is zero),
        it follows that
        $|T_n|\geq |\Sigma|^{an}$, where $a=1-1/k$.

        \begin{remark} \label{r.extension}
            For every string $x\in T_n$ there is a string $y\in T_{n-1}$ and a symbol $b$ such that $yb=x$.
        \end{remark}

        Let $A_n = \{a_1,\ldots a_u\}$ be the set of palindromes in $T_n$. Since fixing the $n/2$ first symbols of a palindrome (wlog $n$ is even)
        completely determines it, it follows that $|A_n| \leq |\Sigma|^{\frac{n}{2}}$.
        Let us separate the remaining strings in $T_n-A_n$ into $v$ pairs of sets
        $X_{n,i} = \{x_{i,1},\ldots x_{i,t}\}$ and $Y_{n,i} = \{y_{i,1},\ldots y_{i,t}\}$ with
    $t=\frac{|T_n-A_n|}{2v}$,
        $(x_{i,j})^{-1}=y_{i,j}$ for every $1\leq j \leq t$ and $1\leq i \leq v$,
    $x_{i,1}, y_{i,t}$ start with a zero. For convenience we write $X_{i}$ for $X_{n,i}$.

        We construct $S$ in stages.
    Let $f(k)=2k$ and $f(n+1)=f(n)+v+1$. Clearly $$ n^2>f(n)>n.$$ For $n \leq k-1$,
        $S_n$ is an enumeration of all strings of size $n$ in lexicographical order.
        For $n\geq k$,
        \begin{align*}
            S_n = \ &a_1 \ldots a_u \  1^{f(n)} \ x_{1,1} \ldots x_{1,t} \ 1^{f(n)+1} \ y_{1,t} \ldots y_{1,1}
    \ x_{2,1} \ldots x_{2,t} \ 1^{f(n)+2} \ y_{2,t} \ldots y_{2,1} \ldots \\
    &\ldots x_{v,1} \ldots x_{v,t} 1^{f(n)+v} y_{v,t} \ldots y_{v,1}
    \end{align*}
        i.e. a concatenation of all strings in $A_n$ (the $A$ zone of $S_n$) followed by a flag of $f(n)$ ones,
        followed by the concatenations of all strings in the  $X_i$ zones and $Y_i$ zones, separated
    by flags of increasing length. Note that the $Y_i$ zone is exactly the $X_i$ zone written in reverse
    order.
    \noindent
        Let $$S=S_1 S_2 \ldots S_{k-1} \ 1^k \ 1^{k+1} \ \ldots 1^{2k-1} \ S_{k} S_{k+1} \ldots $$
        i.e. the concatenation of the $S_j$'s with some extra flags between $S_{k-1}$ and $S_k$.
        We claim that the parsing of $S_n$ ($n\geq k$) by LZ, is as follows:
        $$a_1, \ldots , a_u, \  1^{f(n)}, \ x_{1,1}, \ldots, x_{1,t}, \ 1^{f(n)+1}, \ y_{1,t}, \ldots, y_{1,1},
    \ldots,
    x_{v,1}, \ldots, x_{v,t}, 1^{f(n)+v}, y_{v,t}, \ldots, y_{v,1} .$$
        Indeed after $S_1, \ldots S_{k-1} \ 1^k \ 1^{k+1} \ \ldots 1^{2k-1}$, LZ has parsed every
        string of size $\leq k-1$ and the flags $1^k \ 1^{k+1} \ \ldots 1^{2k-1}$. Together with Remark \ref{r.extension},
        this guarantees that LZ parses $S_n$ into phrases that are exactly all the strings in $T_n$ and
        the $v+1$ flags $1^{f(n)},\ldots,1^{f(n)+v}$.

        Let us compute the compression ratio $\rho_{LZ} (S)$.
        Let $n,i$ be integers. By construction of $S$, LZ encodes every phrase in $S_i$ (except flags),
          by a phrase in $S_{i-1}$ plus one symbol.
        Indexing a phrase in $S_{i-1}$ requires a codeword of length at least logarithmic in the number of phrases parsed
        before, i.e. $\log (P(S_1 S_2 \ldots S_{i-2}))$.
        Since $P(S_i)\geq |T_i| \geq |\Sigma|^{ai}$, it follows that for almost every $i$
        $$
        P(S_1 \ldots S_{i-2}) \geq \sum^{i-2}_{j=1}|\Sigma|^{aj} = \frac{|\Sigma|^{a(i-1)} -|\Sigma|^a}{|\Sigma|^a-1} \geq b |\Sigma|^{a(i-1)}
        $$
        where the inequality holds because $a<1$ (hence the denominator is less than 1).
        Letting $t_i=|T_i|$, the number of symbols output by LZ on $S_i$ is at least
        \begin{align*}
        P(S_i) \log P(S_1\ldots S_{i-2})
        &\geq t_i \log b |\Sigma|^{a(i-1)}\\
        &\geq c t_i(i-1)
        \end{align*}
        where $c=c(a)$ can be made arbitrarily close to $1$, by choosing $a$ accordingly.
        Therefore
        $$
        |LZ(S_1 \ldots S_n)| \geq  \sum_{j=1}^{n} c t_j(j-1)
        $$
        Since
    $$|S_1 \ldots S_n| = |S_1 \ldots S_{k-1} 1 \ldots 1|+|S_{k}\ldots S_{n}|
    \leq |\Sigma|^{3k} + \sum_{j=k}^{n} (j t_j + (v+1)(f(j)+v))  $$
    and
    $|LZ(S_1\ldots S_n)| \geq \sum_{j=k}^{n}ct_j(j-1)$,
    the compression ratio is given by

\begin{align*}
        \rho_{LZ}(S_1 \ldots S_n)   &\geq
    c\frac{\sum_{j=k}^{n} t_j(j-1) }{|\Sigma|^{3k} + \sum_{j=k}^{n} (j t_j + (v+1)(f(j)+v))}\\
        &= c - c \frac{|\Sigma|^{3k} + \sum_{j=k}^{n} (j t_j + (v+1)(f(j)+v) -t_j(j-1))  }
    {|\Sigma|^{3k} + \sum_{j=k}^{n} (j t_j + (v+1)(f(j)+v))} \\
    &= c - c\frac{|\Sigma|^{3k} + \sum_{j=k}^{n} (t_j + (v+1)(f(j)+v))  }
    {|\Sigma|^{3k} + \sum_{j=k}^{n} (j t_j + (v+1)(f(j)+v))  }
        \end{align*}

        The second term in this equation 
        can be made arbitrarily small for $n$ large enough:
        Let $k < M\leq n/3$, we have
        \begin{align*}
        \sum_{j=k}^{n} j t_j &\geq \sum_{j=k}^{M} j t_j + (M+1)\sum_{j=M+1}^{n}  t_j\\
        &= \sum_{j=k}^{M} j t_j + M\sum_{j=M+1}^{n} t_j+ \sum_{j=M+1}^{n}  t_j\\
          &\geq \sum_{j=k}^{M} j t_j + M\sum_{j=M+1}^{n} t_j+ \sum_{j=M+1}^{n}  |\Sigma|^{aj}\\
          &\geq  \sum_{j=k}^{M} j t_j + M\sum_{j=M+1}^{n} t_j+ |\Sigma|^{an}\\
        \end{align*}
    We have
    $$
    |\Sigma|^{an} \geq M [ |\Sigma|^{3k} +\sum_{j=k}^{M} t_j + (v+1) \sum_{j=k}^{n} (f(j)+v) ]
    $$
    for $n$ large enough, because $f(j)<j^2$.
    Hence
    $$
    c\frac{|\Sigma|^{3k} + \sum_{j=k}^{n} (t_j + (v+1)(f(j)+v))  }
    {|\Sigma|^{3k} + \sum_{j=k}^{n} (j t_j + (v+1)(f(j)+v))  }
    \leq
    c\frac{|\Sigma|^{3k} + \sum_{j=k}^{n} (t_j + (v+1)f(j)+v)  }
    {M[ |\Sigma|^{3k} + \sum_{j=k}^{n} (t_j + (v+1)(f(j)+v))  ]} = \frac{c}{M}
    $$
    i.e.
        $$
        \rho_{LZ}(S_1 \ldots S_n)   \geq c - \frac{c}{M}
        $$
        which by definition of $c, M$ can be made arbitrarily close to $1$ by choosing $k$ accordingly, i.e
        $$
        \rho_{LZ}(S_1 \ldots S_n)   \geq 1- \epsilon.
        $$

        Let us show that $R_{PD}(S)\leq \frac{1}{2}$. Consider the following ILPD compressor $C$.
          First $C$ outputs its input until it reaches zone $S_k$.
        Then on any of the zones $A,X_i$ and the flags, $C$ outputs them symbol by symbol; on $Y_i$ zones, $C$
          outputs one zero for every $v'$ symbols of input. To recognize a flag: as soon as $C$ has read $k$ ones, it knows it has reached a flag.
            For the stack: $C$ on  $S_n$ cruises through the $A$ zone up to the
            first flag, then starts pushing the whole $X_1$
            zone onto its stack until it hits the second flag. On $Y_1$, $C$ outputs a $0$ for every $v'$ symbols of input,
            pops one symbol from the stack for every symbol of input, and cruises through $v'$ counting states, until the stack is
            empty (i.e. $X_2$ starts).
            $C$  keeps doing the same for each pair $X_i,Y_i$ for every
            $2\leq i \leq v$. Therefore at any time, the number of symbols of $Y_i$ read so far is equal to
            $v'$ times the number of symbols output on the $Y_i$ zone plus the index of the current counting state.
            On the $Y_i$ zones, $C$ checks that every symbol of $Y_i$ is equal to the symbol it pops from the stack;
            if the test fails, $C$ enters an error state, outputs an error flag and thereafter outputs every symbol it reads  (this guarantees
            IL on sequences different from $S$).
            This together with the fact that the $Y_i$ zone is exactly the $X_i$ zone written in reverse order,
            guarantees that $C$ is IL. Before giving a detailed
        construction of $C$, we compute the upper bound it yields on $R_{PD}(S)$.

    \begin{remark}\label{r.martingale.capital}
        For any $j\in\N$, let $p_j=C(S[1\ldots j])$ be the output of $C$ after reading $j$ symbols of $S$.
        Is it easy to see that the ratio $\frac{|p_j|}{|S[1\ldots j]|}$ is maximal
        at the end of a flag following an $X_i$ zone, since the flag is followed by a $Y_i$ zone,
        on which $C$ outputs one symbol for every $v'$ input symbols.
    \end{remark}

    Let $0\leq I < v$. We compute the ratio $\frac{|p_j|}{|S[1\ldots j]|}$ inside zone $S_n$
    on the last symbol of the flag following $X_{I+1}$. At this location (denoted $j_0$), $C$ has output
    \begin{align*}
    |p_{j_0}| &\leq |\Sigma|^{3k} + \sum_{j=k}^{n-1}[j|A_j| + (v+1)(f(j)+v)+\frac{j}{2}|T_j-A_j|(1+\frac{1}{v'})]
                + n|A_n|+(v+1)(f(n)+v) \\&+ \frac{n}{2v}|T_n-A_n|(I+1+\frac{I}{v'})\\
                &\leq |\Sigma|^{pn} + \sum_{j=k}^{n-1}[\frac{j}{2}|T_j|(1+\frac{1}{v'})]
                + \frac{n}{2v}|T_n|(I+1+\frac{I}{v'})
    \end{align*}
    where $p>\frac{1}{2}$ can be made arbitrarily close to $\frac{1}{2}$ for $n$ large enough.

    The number of symbols of $S$ at this point is
    \begin{align*}
        |S[1\ldots j_0]| &\geq \sum_{j=k}^{n-1}j|T_j|
                + n|A_n| + \frac{n}{v}|T_n-A_n|(I+\frac{1}{2})\\
                &\geq  \sum_{j=k}^{n-1}j|T_j|
                + \frac{n}{v}|T_n|(I+\frac{1}{4})
    \end{align*}

    Hence by Remark \ref{r.martingale.capital}
    \begin{align*}
        \limsup_{n\rightarrow\infty} \frac{|p_n|}{|S[1\ldots n]|}
      &\leq \limsup_{n\rightarrow\infty} \frac{|\Sigma|^{pn} + \sum_{j=k}^{n-1}[\frac{j}{2}|T_j|(1+\frac{1}{v'})]
                + \frac{n}{2v}|T_n|(I+1+\frac{I}{v'})}{\sum_{j=k}^{n-1}j|T_j|+ \frac{n}{v}|T_n|(I+\frac{1}{4})}\\
       &= \limsup_{n\rightarrow\infty}
        [
        \frac{|\Sigma|^{pn}}{\sum_{j=k}^{n-1}j|T_j|+ \frac{n}{v}|T_n|(I+\frac{1}{4})}
        + \frac{1}{2}\frac{\sum_{j=k}^{n-1}j|T_j| + \frac{n}{v}|T_n|(I+\frac{1}{4})}
            {\sum_{j=k}^{n-1}j|T_j|+ \frac{n}{v}|T_n|(I+\frac{1}{4})}
        \\&+ \frac{1}{2v'}\frac{\sum_{j=k}^{n-1}j|T_j|}{\sum_{j=k}^{n-1}j|T_j|+ \frac{n}{v}|T_n|(I+\frac{1}{4})}
        + \frac{n|T_n|}{2v}\frac{\frac{I}{v'}+\frac{3}{4}}{\sum_{j=k}^{n-1}j|T_j|+ \frac{n}{v}|T_n|(I+\frac{1}{4})}
        ]
   \end{align*}

    Since $\sum_{j=k}^{n-1}j|T_j| \geq (n-1)|T_{n-1}|\geq (n-1)\frac{|T_n|}{2}$,
    we have
    \begin{align*}
    \sum_{j=k}^{n-1}j|T_j| + \frac{n}{v}|T_n|(I+\frac{1}{4})
    &\geq \frac{n-1}{2}|T_n|+\frac{n}{v}|T_n|(I+\frac{1}{4})\\
    &= \frac{n|T_n|}{2v}(v-\frac{v}{n}+2I+\frac{1}{2}).
    \end{align*}
    Therefore
    \begin{align*}
    \limsup_{n\rightarrow\infty} \frac{|\Sigma|^{pn}}{\sum_{j=k}^{n-1}j|T_j|+ \frac{n}{v}|T_n|(I+\frac{1}{4})}
    &\leq \limsup_{n\rightarrow\infty} \frac{|\Sigma|^{pn}}{\frac{(n-1)}{2}|T_n|}\\
    &\leq \limsup_{n\rightarrow\infty} \frac{|\Sigma|^{pn}}{|\Sigma|^{an}} = 0
    \end{align*}
    and
    $$
        \frac{1}{2v'}\frac{\sum_{j=k}^{n-1}j|T_j|}{\sum_{j=k}^{n-1}j|T_j|+ \frac{n}{v}|T_n|(I+\frac{1}{4})}
        \leq \frac{1}{2v'}
    $$
    which is arbitrarily small by choosing $v'$ accordingly, and
    $$
    \frac{n|T_n|}{2v}\frac{\frac{I}{v'}+\frac{3}{4}}{\sum_{j=k}^{n-1}j|T_j|+ \frac{n}{v}|T_n|(I+\frac{1}{4})}
    \leq \frac{\frac{I}{v'}+\frac{3}{4}}{v-\frac{v}{n}+2I+1}
    $$
    which is arbitrarily small by choosing $v$ accordingly.
    Thus
   $$
        R_{PD}(S)
        = \limsup_{n\rightarrow\infty} \frac{|p_n|}{|S[1\ldots n]|}  \leq \frac{1}{2}.
    $$

        For the sake of completeness we give a detailed description of $C$. Let $Q$ be the following set of states:
        \begin{itemize}
            \item   The start state $q_0$, and $q_1,\ldots q_w \ $ the ``early'' states that will count up to
                    $$w=|S_1 S_2 \ldots S_{k-1} \ 1^k \ 1^{k+1} \ \ldots 1^{2k-1}|.$$
            \item   $q^A_0, \ldots, q^A_k \quad$ the $A$ zone states that cruise through the $A$ zone up to the first flag.
            \item   $q^{f}_{j} \quad$  the $j$th flag state, ($j=1,\ldots,v+1$)
            \item   $q^{X_j}_0, \ldots, q^{X_j}_k \quad$ the $X_j$ zone states that cruise through the $X_j$
            zone, pushing every symbol on the stack,
                    until the $(j+1)$-th flag is met, ($j=1,\ldots,v$).
            \item   $q^{Y_j}_1, \ldots, q^{Y_j}_{v'} \quad$ the $Y_j$ zone states that cruise through the $Y_j$
            zone, popping an symbol  from the stack (per input symbol) and comparing it to the input symbol,
             until the stack is empty, ($j=1,\ldots,v$).
                \item   $q^{r,j}_0, \ldots, q^{r,j}_k \quad$ which after the $j$th flag is detected,
            pop $k$ symbols from the stack that were
                    erroneously pushed while reading the $j$th flag, ($j=2,\ldots,v+1$).
            \item $q_{e}, q_{e'}\quad$ the error states, if one symbol of $Y_i$ is not equal to the content of the stack.
        \end{itemize}
        We next describe the transition function $\delta :Q \times \Sigma^{*} \times\Sigma^{*}\rightarrow Q\times\Sigma^{*}$.
        First $\delta$ counts up to $w$ i.e. for $i=0,\ldots w-1$
        $$
            \delta(q_i,x,y) = (q_{i+1},y) \quad \text{ for any } x,y
        $$
        and after reading $w$ symbols, it enters in the first $A$ zone state, i.e. for any $x,y$
        $$\delta(q_w,x,y) = (q^A_0,y).$$
        Then $\delta$ skips through $A$ until the string $1^k$ is met, i.e. for $i=0,\ldots k-1$ and any $x,y$
        $$
            \delta(q^A_i,x,y) =
        \begin{cases}
        (q^A_{i+1},y) &\text{ if } x=1\\
        (q^A_{0},y) &\text{ if } x\ne 1\\
        \end{cases}
        $$ and
        $$
        \delta(q^A_k,x,y) = (q^{f}_1,y).
        $$
        Once $1^k$ has been seen, $\delta$ knows the first flag has started, so it skips
        through the flag until a zero is met, i.e. for every $x,y$
        $$
            \delta(q^{f}_1,x,y) =
        \begin{cases}
        (q^{f}_1,y) &\text{ if } x=1\\
        (q^{X_1}_{0},0y) &\text{ if } x= 0\\
        \end{cases}
        $$
        where state $q^{X_1}_0$ means that the first symbol of the $X_1$ zone (a zero symbol) has been read, therefore $\delta$ pushes a zero.
        In the $X_1$ zone, delta pushes every symbol it sees until it reads a sequence of $k$ ones, i.e up to the start of the second flag, i.e
        for $i=0,\ldots k-1$ and any $x,y$
        $$
            \delta(q^{X_1}_i,x,y) =
        \begin{cases}
        (q^{X_1}_{i+1},xy) &\text{ if } x=1\\
        (q^{X_1}_{0},xy) &\text{ if } x\ne 1\\
        \end{cases}
        $$
        and
        $$
        \delta(q^{X_1}_k,x,y) = (q^{r,2}_0,y).
        $$
        At this point, $\delta$ has pushed all the $X_1$ zone on the stack, followed by
        $k$ ones. The next step is to pop $k$ ones,
        i.e
        for $i=0,\ldots k-1$ and any $x,y$
        $$
            \delta(q^{r,2}_i,x,y) = (q^{r,2}_{i+1},\lambda)
        $$
        and
        $$
        \delta(q^{r,2}_k,x,y) = (q^{f}_2,y).
        $$
        At this stage, $\delta$ is still in the second flag (the second flag is always bigger than $2k$)
        therefore it keeps on reading ones until a zero (the first symbol of the $Y$ zone) is met. For any $x,y$
        $$
            \delta(q^{f}_2,x,y) =
        \begin{cases}
        (q^{f}_2,y) &\text{ if } x=1\\
        (q^{Y_1}_1,\lambda) &\text{ if } x=0 .
        \end{cases}
        $$
        On the last step, $\delta$ has read the first symbol of the $Y_1$ zone, therefore it pops it. At this stage,
        the stack  exactly contains the $X_1$ zone written in reverse order (except the first symbol),
        $\delta$ thus uses its stack to check that what follows is really the $Y_1$ zone. If it is not the case,
            it enters $q_e$. While cruising through $Y_1$, $\delta$  counts with period $v'$. Thus
            for $i=1,\ldots v'-1$ and any $x,y$
        $$
            \delta(q^{Y_1}_i,x,y) =
        \begin{cases}
        (q^{Y_1}_{i+1},\lambda) &\text{ if } x=y\\
        (q_e,\lambda) &\text{ otherwise } \\
        \end{cases}
        $$
        and
            $$
            \delta(q^{Y_1}_{v'},x,y) =
        \begin{cases}
        (q^{Y_1}_{1},\lambda) &\text{ if } x=y\\
        (q_e,\lambda) &\text{ otherwise } \\
        \end{cases}
        $$

         Once the stack is empty,
        the $X_2$ zone begins. Thus, for any $x,y$, $1\leq i \leq v'$
        $$\delta(q^{Y_1}_i,x,z_0) =
        \begin{cases}
        (q^{X_2}_1,1z_0) &\text{ if } x=1\\
        (q^{X_2}_0,0z_0) &\text{ if } x=0 .
        \end{cases}
        $$

    Then for $2\leq j \leq v$ the states corresponding to the $X_j$
    and $Y_j$ zones behave similarly (that is, states $q^{X_j}_{i}$, $q^{r,j+1}_i$,
    $q^{f}_{j+1}$, and  $q^{Y_j}_{i}$).

    \comment{and $0\leq i \leq k-1$, the behavior is the same, i.e.
    $$
            \delta(q^{X_j}_i,x,y) =
        \begin{cases}
        (q^{X_j}_{i+1},xy) &\text{ if } x=1\\
        (q^{X_j}_{0},xy) &\text{ if } x=0\\
        \end{cases}
        $$and
        $$
        \delta(q^{X_j}_k,x,y) = (q^{r,j+1}_0,y).
        $$
    At this stage we reached the end of the $(j+1)$th flag , therefore we quit $k$ symbols from the stack.
    $$
            \delta(q^{r,j+1}_i,x,y) = (q^{r,j+1}_{i+1},\lambda)
        $$
        and
        $$
        \delta(q^{r,j+1}_k,x,y) = (q^{f}_{j+1},y).
        $$
    At this stage $\delta$ is in the $(j+1)$ th flag, thus:
    $$
            \delta(q^{f}_{j+1},x,y) =
        \begin{cases}
        (q^{f}_{j+1},y) &\text{ if } x=1\\
        (q^{Y_j}_1,\lambda) &\text{ if } x=0 .
        \end{cases}
        $$
    Next the $Y_j$ zone has been reached, so for $i=1,\ldots v'-1$ and any $x,y$
        $$
            \delta(q^{Y_j}_i,x,y) =
        \begin{cases}
        (q^{Y_j}_{i+1},\lambda) &\text{ if } x=y\\
        (q_e,\lambda) &\text{ otherwise } \\
        \end{cases}
        $$
        and
            $$
            \delta(q^{Y_j}_{v'},x,y) =
        \begin{cases}
        (q^{Y_j}_{1},\lambda) &\text{ if } x=y\\
        (q_e,\lambda) &\text{ otherwise } \\
        \end{cases}
        $$

        and for $j\leq v-1$, $\delta$ goes from the end of $Y_j$ to $X_{j+1}$
            i.e. for any $1\leq i \leq v'$
        $$
            \delta(q^{Y_j}_i,x,z_0) =
        \begin{cases}
        (q^{X_{j+1}}_1,1z_0) &\text{ if } x=1\\
        (q^{X_{j+1}}_0,0z_0) &\text{ if } x=0 .
        \end{cases}
        $$}

    At the end of $Y_v$, a new $A$ zone starts, thus for any $1\leq i \leq
    v'$
    $$
            \delta(q_i^{Y_v},x,z_0) =
        \begin{cases}
        (q^A_1,z_0) &\text{ if } x=1\\
        (q^A_0,z_0) &\text{ if } x=0 .
        \end{cases}
        $$

        Once in the $q_e$ state the compressor  outputs a flag then enters state $q_{e'}$, from that point it simply outputs the input, thus
        $$
            \delta(q_e,\lambda,\lambda) = (q_{e'},\lambda)
        $$
            and
            $$\delta(q_{e'},x,y) = (q_{e'},y)$$

        The output  function outputs the input on every state, except on states $q^{Y_j}_1, \ldots, q^{Y_j}_{v'} \quad$            $(j=1,\ldots,v)$
        where for $1\leq i <v'$
        $$
            \nu(q^{Y_j}_{i},b,y) = \lambda
        $$
        and
        $$
            \nu(q^{Y_j}_{v'},b,y) = 0
        $$
          and $q_e$ where a flag is output i.e.,
            $$\nu(q_e,\lambda,\lambda)=10.$$

      Finally, with a similar construction as in the proof of
      Theorem~\ref{theo:ipd}, the inverse of $C$ can be computed by a pushdown
      compressor, showing that $C$ is invPD.  

\end{proof}

\subsection{plogon beats Lempel Ziv}

Our next result uses a Copeland-Erd\"os sequence
\cite{Cham33,CopErd46} on which Lempel-Ziv has maximal
compression ratio, whereas with logspace each prefix of the
sequence can be completely reconstructed from its length.

\begin{theorem}
    There exists a sequence $S$ such that
    $$R_{\pl}(S)=0\quad \text{and }\rho_{LZ}(S)=1.$$
\end{theorem}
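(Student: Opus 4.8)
The plan is to take $S$ to be the Copeland--Erd\H{o}s/Champernowne word over $\Sigma$, namely the concatenation $S=w_1w_2w_3\cdots$ of all strings of $\Sigma^*$ listed in the length-lexicographic order $\le$ of Section~\ref{sec:prel}; equivalently, $S$ begins with all strings of length $1$, then all strings of length $2$, and so on. This $S$ is completely determined by its length, which is what a plogon machine exploits, and it is ``maximally rich'', which is what defeats Lempel--Ziv. I must establish $R_{\pl}(S)=0$ and $\rho_{LZ}(S)=1$.

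For $R_{\pl}(S)=0$ I would build an ILplog $C$ as follows. Since the $i$-th symbol $S[i]$ is computable from $i$ by elementary index arithmetic (locate the length-$k$ block containing position $i$, then the string and symbol inside it) in $O(\log i)$ space, an online machine can regenerate the expected symbols of $S$ while reading its input, comparing $S[i]$ with the input symbol at each position $i$. The machine is handed $n=|w|$ in binary. If the entire input agrees with $S[1..n]$, then $C$ outputs a distinguished flag followed by a self-delimiting encoding of $n$, of total length $O(\log n)$. Otherwise, letting $t$ be the first position of disagreement, $C$ outputs a different flag, a self-delimiting encoding of $t$, the symbol $w[t]$, and then echoes the suffix $w[t+1..n]$ symbol by symbol as it reads it. All output is produced left-to-right using only $O(\log n)$ space, so $C$ is a genuine plogon transducer; it is injective because from its output one recovers the mode, then either $n$ (and $w=S[1..n]$) or the triple $(t,w[t],\text{suffix})$ together with $w[1..t-1]=S[1..t-1]$. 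On the prefixes of $S$ itself we are always in the agreement case, so $|C(S[1..n])|=O(\log n)$ and hence $\tfrac{|C(S[1..n])|}{n}\to 0$.

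The heart of the argument is $\rho_{LZ}(S)=1$, and here the special shape of $S$ makes the parse exactly computable. I would prove by induction on $k$ the following parsing lemma: when LZ processes $S$, the block consisting of all length-$k$ strings is parsed into exactly $|\Sigma|^k$ phrases, each phrase being one complete length-$k$ string, and after this block the dictionary is exactly the set of all strings of length $\le k$. The induction step follows directly from the LZ78 rule: at the start of a length-$k$ string $u$ the longest dictionary match of the remaining input is the length-$(k-1)$ prefix $u[1..k-1]$ (it lies in the dictionary by the hypothesis, while $u$ itself has length $k$ and is not yet present, and no stored length-$k$ phrase can be a prefix of $u$), so the new phrase is exactly $u$ and its pointer is to $u[1..k-1]$. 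Block boundaries cause no carry-over, since every phrase ends precisely at the end of a length-$k$ string; the base case $k=1$ (all single symbols, each pointing to $\lambda$) is immediate.

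Given the lemma, the ratio is a direct estimate: in the length-$k$ block each of the $|\Sigma|^k$ phrases is encoded by a pointer to a length-$(k-1)$ string, whose dictionary index is $\Theta(|\Sigma|^{k-1})$, plus one symbol, so with a prefix-free index code it costs $(k-1)+o(k)$ symbols, while the phrase itself has length $k$. Hence the output spent on the length-$k$ block is $(1+o(1))\,k|\Sigma|^k$, exactly $(1+o(1))$ times its input length; summing over $k\le L$ yields $|LZ(S[1..n])|=(1+o(1))\,n$ at block boundaries, and the same per-phrase balance holds inside a block. The routine parts are the index arithmetic for the plogon machine and the sum $\sum_{k\le L}k|\Sigma|^k$. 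The main obstacle is the LZ side: one must (i) prove the exact-parse lemma cleanly, being careful that no stored length-$k$ phrase can produce a match longer than $k-1$ and that blocks do not interfere, and (ii) control the lower-order terms (the prefix-free pointer overhead $o(k)$ and the vanishing contribution of the small-$k$ blocks) \emph{uniformly} in $n$, so that $\tfrac{|LZ(S[1..n])|}{n}\to 1$ for all large $n$ and not merely at block boundaries --- this uniformity is exactly what yields the claimed value of the liminf, $\rho_{LZ}(S)=1$.
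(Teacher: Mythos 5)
Your proposal is correct and takes essentially the same route as the paper: the same Copeland--Erd\H{o}s/Champernowne sequence $S=E(\Sigma^*)$, and essentially the same plogon compressor, which verifies online that the input is a prefix of $S$ (feasible in $O(\log n)$ space) and, on the first disagreement, outputs a self-delimiting position and echoes the remaining suffix. The only substantive difference is that the paper simply asserts $\rho_{LZ}(S)=1$ as the worst case for LZ78, whereas you prove it via the exact-parse lemma (each length-$k$ string becomes one phrase, costing $(k-1)+o(k)$ output symbols against $k$ input symbols); that argument is correct and fills in a detail the paper leaves implicit.
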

\begin{proof}
    Let $S=E(\Sigma^{*})$ be the enumeration of  strings over
    $\Sigma$
    in the standard lexicographical order.
    LZ does not compress $S$ at all, for this algorithm it is the worst possible case, i.e.
    $$\rho_{LZ}(S)=1.$$

    For any input $w$, with $|w|=n$, let $m\in\N$, $x\in\Sigma^{*}$ be such that $w=S[1\ldots m]x$, and $S[1\ldots m+1]\not\sqsubset w$.
Then we define compressor $C$ as
    $C(w,|w|)=\mathrm{dbin}(m) 01 x$,
    where $\mathrm{dbin}(m)$ is $m$ written in binary with every bit doubled
    (such that the separator 01 can be recognized).
    $C$ is clearly 1-1. $C$ is plogon, because on input $(w,n)$, $C$ reads the input online
    to check that $w$ is a prefix of $S$ (i.e. the standard enumeration of strings over
    $\Sigma$);
    the biggest string to check has size $\log n$, therefore the check can be done in plogon.
    As soon as the check fails, $C$ outputs the length (in binary, with every bit doubled)
    of the prefix of the input that satisfied the check
    (at most $2\log n$ bits) followed by 01 and the rest of the input.

    The worst case compression ratio for sequence $S$ is given by
    $$
    R_{\pl}(S)=\limsup_{n\rightarrow\infty}\frac{|C(S[1\ldots n],n)|}{n}
    =\limsup_{n\rightarrow\infty}\frac{2\log n}{n} =0.
    $$
\end{proof}

\subsection{plogon beats Pushdown compressors}\label{section:last}
The next result shows that plogon compressors outperform
our most general family of pushdown compressors on certain
sequences.

The proof is an extension of the intuition in Theorem
\ref{theo:first}, from a few Kolmogorov-random strings a
much longer pushdown-incompressible string can be
constructed, even if an identifying index for each string
is included. The index can then be used by a
polylogarithmic compressor to compress optimally the
sequence.

\begin{theorem}\label{theo:last}
    There exists a sequence $S$ such that
    $$R_{\pl}(S) = 0 \quad \text{and } \rho_{\pd}(S)=1.$$
\end{theorem}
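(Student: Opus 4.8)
The plan is to reuse the sequence of Theorem~\ref{theo:first} almost unchanged, enlarging only the expansion so that a polylogarithmic-space machine can afford to store the dictionary of blocks. At stage $n$ I would fix a Kolmogorov-random $x=x_1\cdots x_{n^2}$ with $|x_i|=n$ and $K(x)\ge n^3\log|\Sigma|$, and set $S_n=x_{i_1}\cdots x_{i_{L_n}}$, where each index $i_j$ is chosen, given the prefix already produced, by the second part of Claim~\ref{c.many.incompressible} so that $x_{i_j}$ is incompressible by every $C\in F_n$. The single modification is to take $L_n$ exponentially large, say $L_n=2^n$, instead of the polynomial length used in Theorem~\ref{theo:first}; this inflates $|S_1\cdots S_n|$ to be exponential in $n$ while leaving the length-$n$ block structure of each stage intact.

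The bound $\rho_{\pd}(S)=1$ is then inherited from Theorem~\ref{theo:first}. Claim~\ref{c.many.incompressible} concerns a single length-$n$ block appended to an arbitrary prefix, so it is completely insensitive to how many blocks the expansion contains; the selection succeeds at each of the $L_n$ steps exactly as before, since at least $n^2/2$ candidates survive the union bound over $F_n$ regardless of the prefix. Repetitions among the $i_j$ (forced by $L_n\gg n^2$) do not help a pushdown compressor, because the stack is LIFO and the $\lambda$-rule budget keeps each block's output a function only of the state and the top $n\log^{(2)}n$ stack symbols, so it can never random-access a block buried deep in the stack. Fixing any enumerated $C_k$ and applying the same averaging as in Theorem~\ref{theo:first}---now over exponentially growing stages, which only makes the tail dominate more strongly---gives $\liminf_m |C_k(S[1\ldots m]\$)|/m\ge 1$ for every $k$, hence $\rho_{\pd}(S)=1$.

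For $R_{\pl}(S)=0$ I would exhibit one structure-aware plogon compressor $C$. From $|w|$ it computes the current stage $n$ and thus knows the input is cut into blocks of length $n$ drawn from a pool of at most $n^2$ distinct strings. $C$ keeps an explicit dictionary of the distinct blocks seen in the current stage: on a block's first occurrence it outputs a ``new'' marker and the block verbatim and stores it; on any later occurrence it outputs a ``repeat'' marker and the $O(\log n)$-bit index of the matching stored block; on a non-conforming input it falls back to raw output behind a flag, exactly as in the proof that plogon beats Lempel--Ziv. This $C$ is injective on all of $\Sigma^*$ and, crucially, unlike a pushdown it has random access to its whole dictionary, so it compresses every repetition, not only consecutive ones. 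The dictionary occupies at most $n^3$ symbols, which is $(\log|w|)^{O(1)}$ precisely because $|w|$ is exponential in $n$ (this is exactly the regime in which the obstruction of Theorem~\ref{t.lz.1} disappears), so $C$ is genuinely plogon.

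On $S$ the only verbatim output inside stage $n$ is the at most $n^2$ first occurrences, totalling $n^3$ symbols, while the remaining $\approx L_n$ repetitions cost $O(\log n)$ symbols each; since the preceding prefix already has length exponential in $n$, both this $n^3$ burst and the $O(L_n\log n)$ index stream are negligible against the input length at every prefix, giving $\limsup_m|C(S[1\ldots m])|/m=0$. The one delicate point, and the only real obstacle, is this space accounting: storing the $n^3$-symbol dictionary is affordable only because the expansion is exponential, so I must pick $L_n$ large enough to make the dictionary polylogarithmic in the prefix length and then re-check that the enlarged $L_n$ neither disturbs the per-block pushdown argument (it does not, being block-local) nor spoils the worst-case plogon ratio (the verbatim burst stays negligible). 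Everything else is lifted directly from Theorem~\ref{theo:first} and the plogon-beats-Lempel--Ziv construction.
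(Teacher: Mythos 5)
Your skeleton is the same as the paper's: a pool of $n^2$ Kolmogorov-random length-$n$ blocks per stage, the per-block pushdown-incompressibility claim (which, as you correctly note, is insensitive to the length of the prefix $w$ and to how many blocks the stage contains), and an exponential expansion $L_n=2^n$ chosen precisely so that an $n^3$-symbol dictionary is polylogarithmic in the prefix length. The pushdown half of your argument is therefore fine and matches the paper's Theorem~\ref{theo:last}. The substantive deviation is on the plogon side: the paper writes the dictionary and the indices into the sequence itself, taking $S_n=x_1x_2\cdots x_{n^2}\,i_1x_{i_1}\,i_2x_{i_2}\cdots i_{2^n}x_{i_{2^n}}$ with each $i_j$ coded in $2\log n$ symbols (and correspondingly proves incompressibility of the strings $ix_i$ in Claim~\ref{cd.many.incompressible}), so its plogon compressor only has to store a position-determined table of $n^2$ blocks, copy the given index $i_j$, and \emph{verify} that the next $n$ symbols match entry $i_j$. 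You omit the indices and make the compressor \emph{search} a dictionary it builds from first occurrences.

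That change introduces a genuine gap as written: a plogon transducer must respect the $\log^{O(1)}|w|$ space bound on \emph{every} input, but on an input that is not a prefix of $S$ the blocks of stage $n$ can all be distinct, so your self-built dictionary grows to $\Theta(|w|)$ symbols. Your proposed escape hatch, ``fall back to raw output on a non-conforming input,'' cannot be implemented here: unlike the enumeration sequence in the plogon-beats-Lempel--Ziv proof, conformance with $S$ is not checkable in polylogarithmic space (or at all), since $S$ is defined from Kolmogorov-random strings and an enumeration of pushdown compressors; the only trigger available to you is a structural one. The repair is easy but must be stated: cap the dictionary at $n^2$ entries, output any further ``new'' block verbatim without storing it (this never fires on $S$), buffer each incoming block ($n$ symbols, which is polylog) before deciding new-versus-repeat, and prepend $|w|$ to the output so that the block structure is recoverable and $C$ is injective --- the paper's compressor does exactly this last step. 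With those patches your route works and is arguably slightly stronger (the witness sequence carries no explicit indices), whereas the paper's embedded-index design buys a compressor whose correctness and space bound are immediate because it never searches.
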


\begin{proof}
    Consider the sequence $S=S_1S_2\ldots$ where $S_n$ is constructed as follows.
    Let $x=x_1 x_2 \ldots x_{n^2}$ ($|x_i|=n$) be a random string with $K(x)\geq n^3\log |\Sigma|$.
    Let
    $$S_n= x_1 x_2 \ldots x_{n^2}i_1x_{i_1}i_2x_{i_2}\ldots i_{2^n}x_{i_{2^n}}$$
    where $i_j \in\{1,\ldots n^2\}$ for every $1\leq j \leq 2^n$ are indexes coded in $2\log n$
    bits, defined later on.

 Let $C_1, C_2,\ldots$ be an enumeration of all  ILPDCwE such that
    $C_i$ can be encoded in at most $i$ bits and such that
            a maximum of
            $\log^{(2)}i$ $\lambda$-rules can be applied per symbol.

    The following claim shows that there are many $C$-incompressible
    strings $x_i$.

    \begin{claim}\label{cd.many.incompressible}
    Let $F_n=\{C_1,\ldots,C_{\log n}\}$.
    Let $w\in\Sigma^*$.
    \begin{enumerate}
\item Let
    $C\in F_n$.    There are at least $(1-\frac{1}{2\log n})n^2$ strings $ix_i$
($1\leq i \leq n^2$) such that
    $$|C(wix_i)|-|C(w)| > n-2\sqrt{n} .$$

    \item There is a  string $x_i$ such that  for every  $C\in F_n$, $$|C(wix_i)|-|C(w)| > n-2\sqrt{n} .$$
\end{enumerate}
    \end{claim}
\begin{proofof}{of Claim \ref{cd.many.incompressible}}
After having read $w$,
    $C$ is in state $q$, with stack content $yz$, where $y$ denotes the $(n+2\log n)\log^{(2)}n$
    topmost symbols of the stack (if the stack is shorter then $y$ is the whole stack).
    It is clear that while reading an $ix_{i}$, $C$ will not pop the stack below $y$.

    Let $T=(1-\frac{1}{2\log n})n^2$, and let $C(q,yz,ix_i\$)$
    denote the output of $C$ when started in state $q$ on input $ix_i\$$ with stack content $yz$.
    Suppose the claim false, i.e.
    there exist more than $n^2-T$ words $ix_i$ such that $C(q,yz,ix_i\$)=p_i$, ends in state $q_i$,
    and $|p_i|\leq n-2\sqrt n+O(1)$ (notice that the output on symbol $\$$ is $O(1)$). Denote by $G$ the set of such strings $x_i$.
    This yields the following short program for $x$ (coded with alphabet $\Sigma$):
    $$p=(n,C,q,y,a_1t_1a_2t_2\ldots a_{n^2}t_{n^2})$$
    where each comma costs less than $3\log |s|$,   where $s$ is the element between two commas;
    $a_i=1$ implies $t_i=x_i$, $a_i=0$ implies $x_i\in G$ and
    $t_i = d(q_i)01d(|p_i|)01p_i$ (where $d(z)$ for any string $z$, is the string written with every symbol doubled),
    i.e. $|t_i|\leq n - \sqrt n$.
    $p$ is a program for $x$: once $n$ is known, each $a_i t_i$ yields either $x_i$ (if $a_i=1$)
    or $(p_i,q_i)$ (if $a_i=0$). From $(p_i,q_i)$, simulating $C(q,yz,u\$)$ for each $u\in\Sigma^{n+2 \log n}$
    yields the unique $u=ix_i$ such that $C(q,yz,u\$)=p_i$ and ends in state $q_i$. The simulations are possible,
    because $C$ does not read its stack further than $y$, which is given.
    We have
    \begin{align*}
    |p|&\leq O(\log n) + (n+2\log n)\log^{(2)}n+(n+1)T+(n^2-T)(n-\sqrt n)\\
    &\leq O(n^2) + n^3 - \frac{n^{2.5}}{2\log n}\\
    &\leq n^3 -\frac{n^{2.5}}{4\log n} \\
    \end{align*}
    which contradicts the randomness of $x$, thus proving part 1. 

    Let $W_j$ be the set of strings $ix_i$ that are compressible by $C_j$; by
        1.,
    $|W_j|\leq n^2/2\log n$. Let $R=\{ix_{i}\}_{i=1}^{n^2} - \cup_{j=1}^{\log n}W_j$
    be the set of strings incompressible by all $C\in F_n$. We have
    $$|R|\geq n^2 - \log n \cdot n^2 / 2\log n =n^2/2> 1.$$
    This proves part 2.
\end{proofof}

       We finish the definition of $S_n$ by picking
     $i_1x_{i_1}$ to be the first string fulfilling the second part of Claim \ref{cd.many.incompressible} for $w=S_1S_2\ldots S_{n-1}$.
    The construction is similar for all strings $\{x_{i_j}\}_{j=2}^{2^n}$, by taking $w=S_1S_2\ldots S_{n-1}x_{i_1}\ldots x_{i_{j-1}}$, thus ending the construction
    of $S_n$.

    Let us show that $\rho_{\pd}(S)=1$. Let $\epsilon>0$.
    Let $C=C_k$ be an ILPDCwE; then for almost every $n$, and for all $0\leq t \leq 2^n$,
    because $|S_1\ldots S_{n-1}|$ is exponentially larger than
    the first $n^2$ $x_i$'s of zone $S_n$, it is good enough to compute the compression ratio
    only after those first $n^2$ $x_i$'s and after each $ix_i$. We have
    \begin{align*}
    &\frac{|C(S_1\ldots S_{n-1}S_n[n^2+t(n+2\log n)]\$)|}{|S_1\ldots S_{n-1}S_n[n^2+t(n+2\log n)]|} \\
    &\geq
    \frac{\sum_{j=k}^{n-1}(2^j)(j-2\sqrt j)+ t(n-2\sqrt n)}
    {\sum_{j=1}^{n-1}(j^2+2^j(j+2\log j))+n^2 +t(n+2\log n)}\\
    &\geq  \frac{(1-\alpha)\sum_{j=1}^{n-1}j2^j+n^2+tn}
    {(1+\alpha)\sum_{j=1}^{n-1}j2^j+n^2+tn} - \frac{2(t+1)\sqrt n}{\sum_{j=1}^{n-1}j2^j+n^2+tn}
    -\frac{(1-\alpha)\sum_{j=1}^{k}j2^j}{\sum_{j=1}^{n-1}j2^j+n^2+tn}\\
    &\geq 1 -\epsilon/4 -\epsilon/4 -\epsilon/4 \geq 1 - \epsilon
    \end{align*}
    where $\alpha$ can be made arbitrarily small for large enough $n$.

    We show that $R_{\pl}(S)=0$.    Consider the following $\pl$ compressor $C$, where every output bit  is output
    doubled except commas (coded by $10$) and the error flag (coded by $01$).
    First $C$ outputs the length of the input (in binary) followed by a comma.
    For the $n^2$ first $x_i$'s of zone $S_n$, $C$ outputs them (and stores them).
    For the remaining $i_jx_{i_j}$'s, only $i_j$ is output, and $C$ checks that what follows
    $i_j$ is indeed $x_{i_j}$. If at any point in time the test fails, the error mode is entered.
    In error mode, $01$ is output, followed by the rest of the input, starting right after the  $i_j$
    where the error occurred.

    It is easy to check that $C$ is polylog space, since at the beginning of zone $S_n$, the available space
    is of order $\text{poly}(n)$.

    $C$ is IL, because from $C$'s output, we know the length of the input and whether the error mode has been
    entered or not. If there is no error, all the first $n^2$  $x_i$'s of zone $S_n$ can be recovered,
    followed by all strings $i_jx_{i_j}$. If the error mode is entered, by the previous argument the sequence $S_n$ can
    be reconstructed up to the last $i_j$ before the error. The rest of the output yields the rest of the sequence.

    Let us compute the compression ratio. Let $\epsilon >0$. Let $n\in\N$ and $0\leq t \leq 2^n$.
    Because $|S_1\ldots S_{n-1}|$ is exponentially larger than
    the first $n^2$ $x_i$'s of zone $S_n$, it is good enough to compute the compression ratio
    only after those first $n^2$ $x_i$'s.
    We have
    \begin{align*}
    \frac{|C(S_1\ldots S_{n-1}S_n[n^3+t(n+2\log n)])|}{|S_1\ldots S_{n-1}S_n[n^3+t(n+2\log n)]|} &\leq
    \frac{2[\sum_{j=1}^{n-1}j^3+2^j(2\log j)+n^3+2t\log n]}
    {\sum_{j=1}^{n-1}j^3+2^j(j+2\log j)+n^3+t(n+2\log n)}\\
    &\leq \frac{2[\sum_{j=1}^{n-1}3\cdot2^j\log j+n^3+2t\log n]}
    {\sum_{j=1}^{n-1}j2^j+n^3+tn}\\
    &\leq \frac{6[\sum_{j=1}^{n-1}2^j\log j]}{\sum_{j=1}^{n-1}j2^j} + \epsilon/4 + \epsilon/4\\
    \end{align*}
    Since $\log j < \frac{\epsilon}{24} j$ for all $j>j_0$ we have
    \begin{align*}
    \frac{|C(S_1\ldots S_{n-1}S_n[n^3+t(n+2\log n)])|}{|S_1\ldots S_{n-1}S_n[n^3+t(n+2\log n)]|}
    &\leq \frac{6[\sum_{j=1}^{j_0}2^j\log j]}{\sum_{j=1}^{n-1}j2^j} +
    \frac{\epsilon/4[\sum_{j=j_0+1}^{n-1}j2^j ]}{\sum_{j=1}^{n-1}j2^j} + \epsilon/2 \\
    &\leq \epsilon/4 + \epsilon/4 + \epsilon/2 \leq \epsilon .
    \end{align*}
\end{proof}

   \section{Conclusion}\label{sec:conclusions}
   The equivalence of compression ratio, effective dimension, and
log-loss unpredictability has been explored in different
settings \cite{FSD,Hitchcock:PHD,EFDAIT}. It is known that
for the cases of finite-state, polynomial-space, recursive,
and constructive resource-bounds, natural definitions of
compression and dimension coincide, both in the case of
infinitely often compression, related to effective versions
of Hausdorff dimension, and that of almost everywhere
compression, matched with packing dimension. The general
matter of transformation of compressors in predictors and
vice versa is  widely studied \cite{ScuBro06}.

In this paper we have done a complete comparison of
pushdown, plogon compression and LZ-compression. It is
straightforward to construct a prediction algorithm based
on Lempel-Ziv compressor that uses similar computing
resources, and it has been proved in \cite{BPDD} that
bounded-pushdown compression and dimension coincide. This
leaves us with the natural open question of whether each
plogon compressor can be transformed into a plogon
prediction algorithm, for which the log-loss
unpredictability coincides with the compression ratio of
the initial compressor, that is, whether the natural
concept of plogon dimension coincides with plogon
compressibility. A positive answer would get plogon
computation closer to pushdown devices, and a negative one
would make it closer to polynomial-time algorithms, for
which the answer is likely to be negative \cite{DIC}.

\bibliographystyle{plain}


\end{document}